\newcommand*{\addFileDependency}[1]{
  \typeout{(#1)}
  \@addtofilelist{#1}
  \IfFileExists{#1}{}{\typeout{No file #1.}}
}
\newtheorem{lemma}{Lemma}
\newtheorem{proposition}{Proposition}
\newtheorem{assumption}{Assumption}
\newtheorem{theorem}{Theorem}
\newcommand{\I}{\mathbf{I}}
\newcommand{\D}{\mathcal{D}}
\newcommand{\Q}{\mathbb{Q}}
\renewcommand{\u}{\mathbf{u}}
\newcommand{\V}{\mathbf{V}}
\newcommand{\htheta}{{\hat{{\theta}}}}
\newcommand{\onen}{\frac{1}{n}}
\newcommand{\oneN}{\frac{1}{N}}
\newcommand{\op}{o_P(1)}
\newcommand{\Op}{O_p(1)}
\newcommand{\ud}{\mathrm{d}}
\newcommand{\DN}{\mathcal{D}_N}
\newcommand{\sumn}{\sum_{i=1}^{n}}
\newcommand{\sumN}{\sum_{i=1}^{N}}
\newcommand{\tp}{^{\rm T}}
\renewcommand{\Pr}{\mathbb{P}}
\newcommand{\Exp}{\mathbb{E}}
\newcommand{\Nor}{\mathbb{N}}
\newcommand{\Var}{\mathbb{V}}
\newcommand{\str}{{\mathrm{str}}}
\newcommand{\mvrs}{{\mathrm{mvrs}}}
\newcommand{\sub}{{\mathrm{sub}}}
\newcommand{\mI}{{\mathbb{I}}}
\newcommand{\keywords}[1]{\par\textbf{Keywords: } #1}
\newcommand{\affilfont}{\small\itshape}
\title{Maximum-Variance-Reduction Stratification for Improved Subsampling}
\author[1,2,3]{Dingyi Wang}
\author[3]{Haiying Wang}
\author[1,2]{Qingpei Hu}
\affil[1]{\affilfont State Key Laboratory of Mathematical Sciences, Academy of Mathematics and Systems Science, Chinese Academy of Sciences, Beijing, China}
\affil[2]{\affilfont School of Mathematical Sciences,  University of Chinese Academy of Sciences, Beijing, China}
\affil[3]{\affilfont Department of Statistics, University of Connecticut, Storrs, CT, USA}
\date{}
\begin{document}
\maketitle

\begin{abstract}
  Subsampling is a widely used and effective approach for addressing the
  computational challenges posed by massive datasets. Substantial progress has
  been made in developing non-uniform, probability-based subsampling schemes
  that prioritize more informative observations. We propose a novel
  stratification mechanism that can be combined with existing subsampling
  designs to further improve estimation efficiency. We establish the estimator's
  asymptotic normality and quantify the resulting efficiency gains, which
  enables a principled procedure for selecting stratification variables and
  interval boundaries that target reductions in asymptotic variance. The
  resulting algorithm, Maximum-Variance-Reduction Stratification (MVRS),
  achieves significant improvements in estimation efficiency while incurring
  only linear additional computational cost. MVRS is applicable to both
  non-uniform and uniform subsampling methods. Experiments on simulated and real
  datasets confirm that MVRS markedly reduces estimator variance and improves
  accuracy compared with existing subsampling methods.
\end{abstract}
\keywords{Massive data, M-estimation, Optimal subsampling}

\section{Introduction}
The rapid advancement of information technology has catalyzed an exponential
surge in data volume. While this trend presents unprecedented opportunities for
scientific discovery, it simultaneously imposes significant challenges. When
applied to massive datasets, traditional statistical methods often incur
prohibitive computational costs and memory bottlenecks.

Subsampling has emerged as a powerful and widely adopted strategy to address
these constraints. By extracting an informative subset of the original data,
subsampling can significantly reduce computational costs while maintaining
statistical efficiency. Various subsampling approaches have been developed and
effectively implemented across a wide spectrum of models, such as linear
regression \citep{ma2015statistical, ma2022asymptotic, wang2019information},
generalized linear models \citep{fithian2014local, ai2021optimal1}, quantile
regression \citep{ai2021optimal2, wang2021optimal}, and semiparametric
regression \citep{breslow2007weighted, yu2022optimal, keret2023analyzing}.
Beyond parameter estimation, subsampling techniques are also used in
applications such as privacy protection \citep{balle2018privacy} and feature
selection \citep{zhu2022feature}. Readers can refer to \cite{YaoWang2021JDS,
yu2024review} for comprehensive reviews of subsampling methods.

To improve statistical efficiency, existing subsampling methods often utilize
non-uniform probabilities or specific design criteria to prioritize informative
observations. For instance, \cite{fithian2014local} proposed local case-control
sampling to target data points that are difficult to classify, while
\cite{ma2015statistical, ma2022asymptotic} focused on selecting observations
with high statistical leverage scores.  Explicitly pursuing estimation
optimality, \cite{wang2018optimal} introduced the optimal subsampling method
under A-optimality criterion (OSMAC), which minimizes the asymptotic mean
squared error of the subsample estimator.  Additionally, strategies leveraging
the covariate structure, such as prioritizing extreme covariate values
\citep{wang2019information} and utilizing orthogonal arrays
\citep{wang2021orthogonal}, have been employed to identify informative points.
Unlike these existing methods, which primarily focus on prioritizing individual
data points within the design, we explore the potential of informative
stratification to enhance estimation efficiency.

This paper proposes a novel stratification strategy that can be integrated with
existing subsampling designs to enhance estimation efficiency. We establish the
asymptotic properties of the resulting estimator under general M-estimation
settings, proving that efficiency improvement is guaranteed regardless of the
chosen stratification variable. To maximize these gains, we investigate the
optimal selection of stratification variables and intervals, proposing a
maximum-variance-reduction stratification (MVRS) scheme. Numerical studies on
both simulated and real-world datasets demonstrate the superiority of this approach.

The remainder of this paper is organized as follows. In
Section~\ref{sec:method}, we introduce the proposed stratified subsampling
framework and establish the asymptotic properties of the resulting estimator.
Section~\ref{sec.alg} discusses the selection of stratification variables and
intervals, and presents the practical MVRS algorithm. Sections~\ref{sim}
and~\ref{sec.cas} present simulations and real-world data analyses, respectively,
demonstrating the performance of the proposed method. Concluding remarks and
further discussions are provided in Section~\ref{sec.con}. All technical proofs
and additional numerical results are provided in the Appendix.

\section{Theoretical Framework and the Benefits of Stratification}
\label{sec:method}
\subsection{Non-uniform Subsampling Framework}
Assume the full dataset $\DN=\{X_i\}_{i=1}^N$ consists of $N$ independent
observations generated from $X\sim P_{\theta}$, where $P_{\theta}$ belongs to
the parametric family $\{P_{\theta},\theta \in \Theta\ \subset \mathbb{R}^d\}$.
Denote the empirical measure by $\Pr_N=N^{-1}\sumN\delta_{X_i}$, where
$\delta_{X_i}$ is a Dirac measure, i.e., $\delta_{X_i}(A)=\mathbb{I}_A(X_i)$ for
any measurable set $A$, with $\mathbb{I}$ being the indicator function. To
estimate $\theta$, we compute the M-estimator
\begin{equation}\label{est}
  \htheta_N=\arg\min_{\theta}\bigg[\Exp_{\Pr_N}\{l(X; \theta)\}
  =\oneN\sumN l(X_i; \theta)\bigg], 
\end{equation}
where $l$ is a loss function and the notation $\Exp_{\Pr_N}$ denotes the
expectation with respect to the measure $\Pr_N$. Since a closed-form solution
for $\htheta_N$ is generally not available, the Newton–Raphson method or other
iterative algorithms are usually employed to obtain a numerical solution. If the
data size $N$ is large, the computational cost can be prohibitive.

Subsampling methods aim to reduce the computational burden by selecting a
subsample of size $n$ (usually $\ll N$) from $\DN$ to replace the full dataset
used in \eqref{est}. Because observed data points in $\DN$ contain different
amounts of information about the parameter $\theta$, non-uniform subsampling
methods take more informative subsamples by assigning higher selection
probabilities to more informative observations. If we obtain a subsample with
replacement from $\DN$ according to probabilities $\{\pi_i\}_{i=1}^N$ with
$\sumN\pi_i=1$, then given $\DN$, the sampled observations $X_1^*,\dots,X_n^*$
are independent and identically distributed (i.i.d.) discrete random variables
that take the value $X_i$ with probability $\pi_i$ for $i=1,\dots,N$. Sampling
with replacement is used here because non-uniform subsampling without replacement
is computationally expensive for large $N$. Denote the subsample data as
$\D_n=\{X_i^*, \pi_i^*\}_{i=1}^n$, where $\pi_i^*$ is the realized subsampling
probability associated with $X_i^*$. The subsampling estimator is computed by
minimizing the inverse-probability-weighted target function. Denote the weighted
empirical measure by $\Q_N=\sumN \pi_i \delta_{X_i}$. The subsampling estimator
can be written as
\begin{equation*}\label{sub.est}
  \htheta_n^{\sub}=\arg\min_{\theta}
  \left[\Exp_{\Pr_N}\left\{\frac{\ud\Pr_N}{\ud\Q_N}\frac{N\mathbb{I}_{\D_n}(X)}{n}
    l(X; \theta)\right\}
  =\onen\sumn\frac{1}{N\pi_{j,i}^*}l(X_{j,i}^*; \theta)\right].
\end{equation*}
The estimator $\htheta_n^{\sub}$ is consistent to the full-data estimator under
some regularity conditions listed below. In the following, the notation
$^{\otimes2}$ denotes the outer product (i.e.,
$\dot{l}^{\otimes2}=\dot{l}\dot{l}\tp$), and $\dot{l}$ and $\ddot{l}$ represent
the gradient vector and Hessian matrix of $l$ with respect to $\theta$,
respectively.

\begin{assumption}\label{asmp01} The parameter space $\Theta$ is compact.
\end{assumption}

\begin{assumption}\label{asmp02} The risk (population loss) function
  $\Exp\{l(X;\theta)\}$ has a unique minimum, and $\Exp\{l^2(X;\theta)\}<\infty$
  for any $\theta \in \Theta$.
\end{assumption}

\begin{assumption}\label{asmp03} The matrix
  $\Exp\big\{\dot{l}^{\otimes2}(X;\theta)\big\}<\infty$ is positive-definite,
  and there exists $\delta>0$ such that
  $\Exp_{\Pr_N}\big\{\|\dot{l}(X;\htheta_N)\|^{2+\delta}\big\}=\Op$.
\end{assumption}

\begin{assumption}\label{asmp04} The matrix $\Exp\big\{\ddot{l}(X;\theta)\big\}$
  is positive-definite. The $(p,q)$-th element of $\ddot{l}(X;\theta)$ satisfies
  $\Exp\big\{\ddot{l}^2_{pq}(X;\theta)\big\}<\infty$, and there exists a $c(x)$
  satisfying $\Exp\{c^2(X)\}<\infty$ such that
  $|\ddot{l}^2_{pq}(x;\theta_1)-\ddot{l}^2_{pq}(x;\theta_2)| <
  c(x)\|\theta_1-\theta_2\|$ for any $\theta_1,\theta_2 \in \Theta$, and
  $p,q=1,2,...,d$.
\end{assumption}

\begin{assumption}\label{asmp05} The sampling probabilities satisfy $\max_{1 \le
  i \le N}(N\pi_i)^{-1}=\Op$.
\end{assumption}

Assumptions~\ref{asmp01}-\ref{asmp04} are commonly used regularity conditions.
Assumptions~\ref{asmp02}, \ref{asmp03}, and \ref{asmp04} essentially impose
moment conditions on the loss function and its first and second order
derivatives. Assumption~\ref{asmp05} imposes a constraint on the sampling
probabilities; it ensures that every data point has a relatively non-negligible
chance to be selected. Note that we do not attempt to impose the weakest
possible conditions here.

For the asymptotic properties of $\htheta_n^{\sub}$, we present a result from
\cite{wang2022sampling} as Proposition~\ref{sub.lem} to facilitate the
discussion.
 
\begin{proposition}\label{sub.lem} Under Assumptions~\ref{asmp01}-\ref{asmp05}
  in Section~\ref{sec:weight-strat-subs}, as $N \rightarrow \infty$ and $n
  \rightarrow \infty$, the estimator $\htheta_{n}^{\sub}$ satisfies
  \begin{equation*}
    \sqrt{n}(\V_{N}^{\sub})^{-1/2}(\htheta_{n}^{\sub}-\htheta_N)
    \rightarrow \Nor(0,I)
  \end{equation*}
  in distribution, where
  \begin{align*}
    \V_{N}^{\sub}
    =\Var_{\Q_N}\left\{\frac{\ud\Pr_N}{\ud\Q_N}\varphi(X;\htheta_N)\right\}
    =\oneN\sumN\frac{1}{N\pi_i}\varphi(X_i;\htheta_N)^{\otimes2},
  \end{align*}
  and $\varphi(X;\theta)=H^{-1}(\theta)\dot{l}(X;\theta)$ with
  $H(\theta)=\Exp_{\Pr_N}\{\ddot{l}(X;\theta)\}$.
\end{proposition}

The optimal subsampling procedure provides optimal subsampling probabilities
that minimize the trace of the conditional covariance matrix $\V_{N}^{\sub}$ in
various settings \citep[e.g.,][]{wang2018optimal, wang2021optimal,
ai2021optimal1, wang2022sampling}. We will show that the optimal subsampling
methods can be further improved by stratification, and we propose a
maximum-variance-reduction stratification (MVRS) for this purpose. Remarkably,
the MVRS is not limited to optimal subsampling methods; it can be used for any
given subsampling probabilities to further improve the estimation efficiency,
and it requires only linear additional computational time to implement.

\subsection{Asymptotics for Stratified Subsampling Estimators}
\label{sec:weight-strat-subs}
In this section, we demonstrate that for a general non-uniform subsampling
method, applying stratification improves the estimation efficiency of the
original estimator. We first introduce the proposed framework using an arbitrary
stratification variable, then derive the asymptotic properties of the resulting
estimator to demonstrate its efficiency gains. The choice of the stratification
variable is discussed in Section~\ref{sec.alg}.

Let $S$ be a stratification variable used to partition the dataset. Here, $S$
may be a function of $X$ or correlated with it. Let $\{S_i\}_{i=1}^N$ denote the
realizations of $S$ for the full dataset $\DN$. We partition the domain of $S$
into $k$ disjoint intervals $\bigcup_{j=1}^k A_j$, and denote the corresponding
partition of the index set as $I=\{1,\dots,N\}=\bigcup_{j=1}^k I_{j}$, where
$I_{j}=\{i \mid S_i \in A_j\}$.

We draw a subsample of size $n_j=\lfloor n\Pi_j+0.5\rfloor$ from each stratum
$\{X_i\}_{i \in I_{j}}$ with replacement, utilizing the sampling distribution
$\{\pi_i/\Pi_j\}_{i \in I_{j}}$. Here, $\Pi_j=\sum_{i \in I_{j}}\pi_i$ is the
stratum weight, $\sum_{j=1}^{k}n_j=n$, and $\lfloor\cdot\rfloor$ denotes the
floor function, ensuring $n_j$ is the nearest integer to $n\Pi_j$. The resulting
subsample is the union of the observations from each stratum, denoted as
$\D_n^{\str}=\bigcup_{j=1}^k\{(X_{j,1}^*, \pi_{j,1}^*), \dots, (X_{j,n_j}^*,
\pi_{j,n_j}^*)\}$.

Define the stratified subsample empirical measure as
\begin{equation*}
  \Pr_N^{\str}=\sumN\sum_{j=1}^k\frac{1}{n_j}\Pi_j\mathbb{I}_{A_j}(S_i)\delta_{X_i}.
\end{equation*}
The subsample estimator based on $\D_n^{\str}$ is given by
\begin{equation}\label{str.est}
\begin{split}
  \htheta_n^{\str}=\arg\min_{\theta}\Bigg[
  \Exp_{\Pr_N^{\str}}&\bigg\{\frac{\ud\Pr_N}{\ud\Q_N}
  \frac{N\mathbb{I}_{\D_n^{\str}}(X)}{n} l(X; \theta)\bigg\} \\
  &=\frac{1}{n}\sum_{j=1}^{k}\bigg\{\sum_{i=1}^{n_j}
  \frac{\Pi_j}{n_j\pi_{j,i}^*}l(X_{j,i}^*;\theta)\bigg\}\Bigg].
\end{split}
\end{equation}

The consistency and asymptotic normality of $\htheta_n^{\str}$ are established
in Theorem~\ref{str.thm}. The theorem indicates that the proposed estimator
$\htheta_{n}^{\str}$ is consistent with the full-data estimator $\htheta_N$ at a
$\sqrt{n}$-rate, with an asymptotic covariance matrix determined by the
stratification scheme.

\begin{theorem}\label{str.thm} Under Assumptions~\ref{asmp01}-\ref{asmp05}, as
  $N \rightarrow \infty$ and $n \rightarrow \infty$,
  $\htheta_{n}^{\str}-\htheta_N$ converges to zero in probability. Moreover,
  \begin{equation*}
    \sqrt{n}(\V_{N}^{\str})^{-1/2}(\htheta_{n}^{\str}-\htheta_N)
    \rightarrow \Nor(0, \I)
  \end{equation*}
  in distribution, where
  \begin{align*}
    \V_{N}^{\str}
    &=\Exp_{\Q_N}\left[
      \Var_{\Q_N}\left\{\frac{\ud\Pr_N}{\ud\Q_N}
      \varphi(X;\htheta_N)\bigg|S_A\right\}\right]\\
    &=\frac{1}{N^2}\sum_{j=1}^k\sum_{i \in I_j}\frac{\pi_i}{\Pi_j^2}
      \bigg\{\frac{\Pi_j}{\pi_i}\varphi(X_i;\htheta_N)
      -\sum_{i \in I_j}\varphi(X_i;\htheta_N)\bigg\}^{\otimes2},
  \end{align*}
  and $S_A$ is a discrete random variable defined as
  $S_{A_j}=j\mathbb{I}_{A_j}(S)$ for $j=1, \dots, k$ (i.e., $S_A$ indicates the
  stratum index of $S$).
\end{theorem}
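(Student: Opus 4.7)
The plan is to reduce the stratified problem to a conditional Z-estimation argument given $\DN$, exploiting that within each stratum the $n_j$ subsamples are i.i.d.\ and that the $k$ strata are drawn independently. Differentiating \eqref{str.est} gives the estimating equation
\begin{equation*}
G_n(\theta) := \oneN\sum_{j=1}^{k} \frac{\Pi_j}{n_j}\sum_{s=1}^{n_j}\frac{\dot{l}(X_{j,s}^*;\theta)}{\pi_{j,s}^*} = 0,
\end{equation*}
which, conditionally on $\DN$, is a stratum-wise Horvitz--Thompson estimator of $\oneN\sumN\dot{l}(X_i;\theta)$ since each $X_{j,s}^*$ has conditional distribution $\pi_i/\Pi_j$ on $\{X_i\}_{i\in I_j}$. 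In particular $\Exp[G_n(\htheta_N)\mid\DN]=0$ by the score equation for $\htheta_N$, so the ensuing expansion is automatically centred.

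Consistency follows from the standard argmax-continuous-mapping template already used for Proposition~\ref{sub.lem}: show that the stratified weighted criterion converges uniformly in probability to $\Exp_{\Pr_N}[l(X;\theta)]$ on the compact $\Theta$, combining the conditional unbiasedness above with a stratum-wise maximal inequality based on Assumptions~\ref{asmp02}--\ref{asmp03}. With consistency in hand, Taylor expansion of $G_n(\htheta_n^{\str})=0$ around $\htheta_N$ yields
\begin{equation*}
\sqrt{n}(\htheta_n^{\str}-\htheta_N) = -\bigl[\ddot L_n(\tilde\theta)\bigr]^{-1}\sqrt{n}\,G_n(\htheta_N),
\end{equation*}
where $\ddot L_n$ is the stratified estimator of the empirical Hessian and $\tilde\theta$ lies between $\htheta_n^{\str}$ and $\htheta_N$. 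The Lipschitz condition in Assumption~\ref{asmp04} and the consistency of $\htheta_n^{\str}$ give $\ddot L_n(\tilde\theta)\cvp \Exp_{\Pr_N}[\ddot{l}(X;\htheta_N)] = M^{-1}(\htheta_N)$, exactly as in Proposition~\ref{sub.lem}.

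The core step is a conditional CLT for $\sqrt n\,G_n(\htheta_N)$. Given $\DN$, $G_n(\htheta_N)$ decomposes into $k$ independent stratum contributions; within stratum $j$ the $n_j$ summands are i.i.d.\ with mean $\Pi_j^{-1}\sum_{i\in I_j}\dot{l}(X_i;\htheta_N)$ and second moment $\Pi_j^{-1}\sum_{i\in I_j}\dot{l}^{\otimes 2}(X_i;\htheta_N)/\pi_i$. A direct calculation gives
\begin{equation*}
n\Var\bigl[G_n(\htheta_N)\bigm|\DN\bigr]
=\frac{1}{N^2}\sum_{j=1}^{k}\frac{n\Pi_j^2}{n_j}\Bigg\{\sum_{i\in I_j}\frac{\dot{l}^{\otimes 2}(X_i;\htheta_N)}{\pi_i}-\frac{1}{\Pi_j}\Bigl(\sum_{i\in I_j}\dot{l}(X_i;\htheta_N)\Bigr)^{\otimes 2}\Bigg\},
\end{equation*}
and since $n_j=\lfloor n\Pi_j+0.5\rfloor$ we have $n\Pi_j/n_j\to 1$, so the limit is obtained by setting that ratio to one. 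Sandwiching by $M(\htheta_N)$ through the Taylor expansion and expanding the square in the theorem's stated formula (the cross terms in $[\Pi_j\varphi(X_i)/\pi_i-\sum_{i\in I_j}\varphi(X_i)]^{\otimes 2}$ collapse to the subtracted $\Pi_j^{-1}(\sum_{i\in I_j}\varphi)^{\otimes 2}$) reproduces $\V_N^{\str}(\htheta_N)$ exactly. Asymptotic normality then follows from the Lindeberg--Feller CLT applied stratum-wise: the Lindeberg condition is verified by Assumption~\ref{asmp05} (bounding $\Pi_j/\pi_i$) together with the uniform $(2+\delta)$ moment in Assumption~\ref{asmp03}.

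The main obstacle I expect is this CLT step: because $k$ is fixed while the per-stratum sample sizes $n_j$ grow and all summands depend on $\DN$, one must invoke a triangular-array CLT with data-dependent weights $\Pi_j$ and data-dependent strata $I_j$, and argue that the integer-rounding gap $n_j-n\Pi_j=O(1)$ is asymptotically negligible uniformly in $j$. Controlling the stratum-wise remainder of the Taylor expansion under the Lipschitz bound in Assumption~\ref{asmp04}, so that replacing $\tilde\theta$ by $\htheta_N$ costs only $\op$, is also technical but proceeds along the same template as Proposition~\ref{sub.lem}.
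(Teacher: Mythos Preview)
Your proposal is correct and follows essentially the same route as the paper's proof: consistency of $\htheta_n^{\str}$ via convergence of the weighted criterion (the paper does this pointwise via Chebyshev plus convexity/compactness and Theorem~5.7 of van der Vaart, rather than a maximal inequality), Taylor expansion of the score equation at $\htheta_N$ (the paper uses the integral-form remainder $\int_0^1\ddot l(\cdot;\htheta_N+\lambda(\htheta_n^{\str}-\htheta_N))\,d\lambda$ rather than a single mean-value $\tilde\theta$, which is the technically clean choice for vector-valued scores), and a stratum-wise Lindeberg--Feller CLT exactly as you describe, with the Lindeberg condition checked via Assumption~\ref{asmp05} and the $(2+\delta)$ moment in Assumption~\ref{asmp03}. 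The variance identification and the handling of $n_j/n\Pi_j\to1$ are also as in the paper.
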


We compare the estimation efficiency of $\htheta_n^{\sub}$ and
$\htheta_n^{\str}$ by examining their asymptotic covariance matrices in the
following theorem.

\begin{theorem}\label{str.thm2} If $\min_{j\in \{1, \dots, k\}}n\Pi_j\geq 1$,
  then the difference between the covariance matrices $\V_{N}^{\sub}$ (from
  Proposition~\ref{sub.lem}) and $\V_{N}^{\str}$ (from Theorem~\ref{str.thm})
  satisfies
  \begin{align}\label{diff}
    \V_{N}^{\str}-\V_{N}^{\sub}=-\Var_{\Q_N}\left[\frac{\ud\Pr_N}{\ud\Q_N}
      \Exp_{\Pr_N}\{\varphi(X;\htheta_N)\mid S_A\}\right]
      \leq 0,
  \end{align}
  where a matrix $B\leq 0$ means that $B$ is a negative semi-definite matrix.
\end{theorem}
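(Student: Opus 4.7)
The plan is to derive~\eqref{diff} as an essentially one-line consequence of the (matrix-valued) law of total variance and then to read off negative semi-definiteness from the fact that the right-hand side is the negative of an outer-product variance. Writing $Y:=\frac{d\Pr_N}{d\Q_N}\varphi(X;\htheta_N)$, Proposition~\ref{sub.lem} identifies $\V_N^{\sub}(\htheta_N)=\oneN\Var_{\Q_N}[Y]$ and Theorem~\ref{str.thm} identifies $\V_N^{\str}(\htheta_N)=\oneN\Exp_{\Q_N}[\Var_{\Q_N}[Y\mid S_A]]$. I would then apply the classical ANOVA identity
\begin{equation*}
  \Var_{\Q_N}[Y]=\Exp_{\Q_N}\left[\Var_{\Q_N}[Y\mid S_A]\right]+\Var_{\Q_N}\left[\Exp_{\Q_N}[Y\mid S_A]\right],
\end{equation*}
valid componentwise for the vector-valued $Y$, and divide by $N$ to conclude
\begin{equation*}
  \V_N^{\str}(\htheta_N)-\V_N^{\sub}(\htheta_N)=-\oneN\Var_{\Q_N}\left[\Exp_{\Q_N}[Y\mid S_A]\right],
\end{equation*}
which is already manifestly negative semi-definite because every variance of a vector-valued random element is a positive semi-definite matrix.

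To put the right-hand side into the exact form shown in~\eqref{diff}, I would then invoke the standard Bayes-type change-of-measure identity for conditional expectations: for any measurable $f$ and any stratum event $\{S_A=j\}$,
\begin{equation*}
  \Exp_{\Q_N}\left[\tfrac{d\Pr_N}{d\Q_N}f\,\Big|\,S_A=j\right]=\frac{\Pr_N(S_A=j)}{\Q_N(S_A=j)}\,\Exp_{\Pr_N}[f\mid S_A=j],
\end{equation*}
which follows immediately from the defining property $\Exp_{\Q_N}[\tfrac{d\Pr_N}{d\Q_N}f\,\mathbb{I}(S_A=j)]=\Exp_{\Pr_N}[f\,\mathbb{I}(S_A=j)]$. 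The ratio $\Pr_N(S_A=j)/\Q_N(S_A=j)$ is precisely the Radon--Nikodym derivative of the pushforward of $\Pr_N$ under $S_A$ with respect to that of $\Q_N$, which the paper continues to denote by $\tfrac{d\Pr_N}{d\Q_N}$. Taking $f=\varphi(X;\htheta_N)$ and substituting this identification into the previous display reproduces~\eqref{diff} exactly.

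The role of the hypothesis $\min_j n\Pi_j\geq 1$ is not in the algebra but in the applicability of Theorem~\ref{str.thm}: it guarantees $n_j=\lfloor n\Pi_j+0.5\rfloor\geq 1$ for every stratum, so the stratified design is well-defined and the representation of $\V_N^{\str}$ used above is valid. There is no genuine obstacle in the proof itself; it reduces to the ANOVA decomposition. The only modest bookkeeping care that is required is to distinguish, within the single symbol $\tfrac{d\Pr_N}{d\Q_N}$, between the derivative viewed as a function of $X$ (which defines $Y$) and the derivative viewed on the coarser $\sigma$-algebra generated by $S_A$ (which appears in~\eqref{diff}); once that distinction is made the argument is immediate.
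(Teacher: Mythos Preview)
Your proposal is correct and follows essentially the same approach as the paper: both arguments reduce to the law of total variance applied to $Y=\frac{d\Pr_N}{d\Q_N}\varphi(X;\htheta_N)$ under $\Q_N$, followed by a change-of-measure step identifying $\Exp_{\Q_N}[Y\mid S_A]$ with $\frac{\Pr_N(S_A)}{\Q_N(S_A)}\Exp_{\Pr_N}[\varphi\mid S_A]$. The only difference is presentational---you invoke the Bayes-type identity $\Exp_{\Q_N}[\tfrac{d\Pr_N}{d\Q_N}f\mid S_A=j]=\tfrac{\Pr_N(S_A=j)}{\Q_N(S_A=j)}\Exp_{\Pr_N}[f\mid S_A=j]$ directly, whereas the paper verifies the equality of the two variances by separately computing the first and second $\Q_N$-moments of each expression and matching them term by term.
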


Theorem~\ref{str.thm2} shows that for any sampling distribution
$\{\pi_i\}_{i=1}^N$, the proposed stratified subsampling estimator attains an
asymptotic covariance matrix that is no larger (in the positive semidefinite
sense) than that of the direct non-uniform subsampling method. While the two
covariance matrices may coincide in certain cases, Section~\ref{sec.Y}
demonstrates that a suitable choice of the stratification variable $S$ yields a
substantial reduction in variance.

\section{Maximum-Variance-Reduction Stratification}
\label{sec.alg}
We have established that the efficiency gains from stratification hinge on the
choice of the stratification variable $S$ and the defining intervals
$\{A_j\}_{j=1}^k$. In this section, we discuss the selection of these
components. Because the primary goal of subsampling is to reduce the
computational burden, we must avoid introducing excessive overhead during the
stratification phase. Therefore, the selection of $S$ and $\{A_j\}_{j=1}^k$
necessitates a trade-off between statistical efficiency and computational cost.
We present a practical algorithm implementing the proposed stratification
strategy and develop an estimator for the asymptotic covariance matrix
$\V_{N}^{\str}$.

\subsection{Stratification Variable}\label{sec.Y} The variance difference given
in \eqref{diff} of Theorem~\ref{str.thm2} implies that the efficiency gain from
stratification is more significant when the conditional expectation
$\Exp_{\Pr_N}\{\varphi(X;\htheta_N)\mid S_A\}$ exhibits larger
variability across strata. Ideally, one would choose $\varphi(X;\htheta_N)$ as
the stratification variable and group observations with similar
$\varphi(X;\htheta_N)$ values. While this is straightforward when $d=1$, there
is no natural ordering for $\varphi(X;\htheta_N)$ when $d>1$, and clustering
algorithms (e.g., K-means) become computationally prohibitive for large
datasets. Since the primary goal of subsampling is to reduce computational cost,
we recommend using a one-dimensional variable $S$ that captures the maximum
variation of $\varphi(X;\htheta_N)$. This approach improves estimation
efficiency without incurring substantial additional computational overhead.

We define $S$ as the linear transformation of $\varphi(X;\htheta_N)$ that
possesses the maximal variance under $\Pr_N$. Specifically, let
$S^{\mvrs}=\u\tp\varphi(X;\htheta_N)$, where
$\u=\arg\max_{\|\tilde\u\|=1}\Var_{\Pr_N}\{\tilde\u\tp\varphi(X;\htheta_N)\}$.
Because this strategy exploits the direction of the influence function with the
largest variance, we term our method Maximum-Variance-Reduction Stratification
(MVRS). Note that
$\max_{\tilde\u}\Var_{\Pr_N}\{\tilde\u\tp\varphi(X;\htheta_N)\}$ is the largest
eigenvalue of the covariance matrix $\Var_{\Pr_N}\{\varphi(X;\htheta_N)\}$, and
$\u$ is the associated eigenvector. Consequently, the stratification variable
$S$ can be obtained via a spectral decomposition of
$\Var_{\Pr_N}\{\varphi(X;\htheta_N)\}$.

\subsection{Stratification Interval}
\label{sec:choice-strat-interv}

The choice of stratification intervals is another critical component of the
proposed stratification scheme. We begin by examining the impact of the number
of strata on the resulting estimator. Let $\{A'_j\}_{j=1}^{k'}$ (where $k'>k$)
denote a refinement of the partition $\{A_j\}_{j=1}^{k}$, and let
$\htheta_n^{\str'}$ represent the estimator derived from this finer
stratification. Analogous to \eqref{diff}, the asymptotic covariance matrix of
$\htheta_n^{\str'}$, denoted as $n^{-1}\V_{N}^{\str'}$, is smaller than that of
$\htheta_n^{\str}$, and it can be shown that
\begin{align}\label{diff.2}
  \V_{N}^{\str'}-\V_{N}^{\str}
  =-\Exp_{\Q_N}\left[\Var_{\Q_N}\left\{\frac{\ud\Pr_N}{\ud\Q_N}\Exp_{\Q_N}\left(\varphi(X;\htheta_N)\Big|S'_A,S_A\right)\Bigg|S_A\right\}\right]\leq 0,
\end{align}
where $S'_A=j\mathbb{I}_{A'_j}(S)$ is a discrete random variable. A detailed
derivation of this result is provided in Section~\ref{pf.diff.2} of the
supplementary material.

This result implies that refining a stratum further reduces the asymptotic
variance. Consequently, the optimal number of strata $k$ is equal to the
subsample size $n$, with a single sample drawn from each stratum. Given that the
subsample size in stratum $j$ is $n_j=\lfloor n\Pi_j+0.5\rfloor$, optimal
stratification requires that $\lfloor n\Pi_j+0.5\rfloor=1$ for all
$j=1,\dots,n$. Notably, this scheme transforms subsampling with replacement into
subsampling without replacement.

This optimal stratification can be achieved by sorting the observations based on
$\{S_i\}_{i=1}^N$, partitioning the sorted data into $n$ strata
$\{A_j\}_{j=1}^n$ such that $\Q_N(S\in A_j)=n^{-1}$, and subsequently selecting
one observation from each stratum according to the sampling distribution
$\{\pi_i/\Pi_j\}_{i\in I_j}$. This implementation requires sorting all
observations according to $\{S_i\}_{i=1}^N$ and determining partitions according
to $\{\pi_i\}_{i=1}^N$, resulting in a time complexity of $O(N\log N)$. While
log-linear complexity is acceptable for many practical applications, we
demonstrate that it can be reduced further without significantly sacrificing
estimation efficiency.

To lower the computational cost, we select a moderate value for $k$ such that $k
\ll n$ and relax the uniformity requirement $\Q_N(S\in A_j)=n^{-1}$. Instead, we
partition $\{S_i\}_{i=1}^N$ into $k$ strata, each containing an equal number of
observations. This stratification scheme does not require evaluating the
$\pi_i$'s and achieves a time complexity of $O(N\log k)$ via a
divide-and-conquer algorithm analogous to quickselect
algorithm~\citep{hoare1961algorithm}. By recursively locating the quantiles and
partitioning the intervals, given the recursion depth as $O(\log k)$, the
algorithm achieves a time complexity of $O(N \log k)$. Since the asymptotic
variance decreases as $k$ increases, selecting $k<n$ involves a trade-off
between computational and estimation efficiency. In Section~\ref{sim}, we show
that even a small $k$ can yield a substantial improvement in estimation
efficiency.

\subsection{Practical Algorithm}
Since the variable $S^{\mvrs}=\u\tp\varphi(X;\htheta_N)$ depends on the
full-data estimator $\htheta_N$, it cannot be implemented directly. To address
this, we adopt a two-step procedure. The first step involves computing a pilot
estimator, which is then used in the second step to implement the proposed
stratification procedure. This pilot step is standard in most existing
informative subsampling methods, such as local case-control
\citep{fithian2014local} and OSMAC \citep{wang2018optimal}, which rely on a
preliminary estimate to approximate the informative sampling distribution.
Consequently, when applying the proposed stratification to these methods, the
existing pilot estimator suffices, incurring no additional computational cost.
For completeness, we describe the full algorithm here. First, a small uniform
subsample of size $n_0$ is drawn from the full dataset to compute a pilot
estimator $\htheta_{n_0}$. This estimator is used to construct the
stratification and, if necessary, the optimal subsampling distribution. Next,
stratified subsampling is performed to select the final subsample and obtain the
estimator. Details are provided in Algorithm~\ref{str.alg}, in which we use the
superscript $\tilde{}$ to indicate the pilot subsample and quantities that are
affected by the pilot estimation.

{
\renewcommand{\baselinestretch}{1}
\begin{algorithm}[H]
\caption{Practical MVRS Algorithm}
\label{str.alg}
\textbf{Step 1: Pilot construction and stratification}
\begin{enumerate}[{1}a):]
\item Take a uniform subsample $\{\tilde{X}_{i}\}_{i=1}^{n_0}$ to obtain a pilot
  estimate $\htheta_{n_0}$:
  \begin{equation}\label{eq:1}
    \htheta_{n_0}=\arg\min_{\theta}
    \frac{1}{n_0}\sum_{i=1}^{n_0}l(\tilde{X}_{i}; \theta).
  \end{equation}
\item [1a'):] (optional) Calculate subsampling probabilities $\tilde\pi_i$.
\item 
  \begin{enumerate}[(i)]
  \item 
      For $i=1,\dots,n_0$, calculate $\varphi(\tilde{X}_{i};\htheta_{n_0}) =
  \tilde{H}_0^{-1}\dot{l}(\tilde{X}_{i};\htheta_{n_0})$, where
  $\tilde{H}_0=n_0^{-1}\sum_{i=1}^{n_0} \ddot{l}(\tilde{X}_{i}; \htheta_{n_0})$.
\item 
  Obtain the eigenvector $\tilde\u$ for the largest eigenvalue of
  \begin{equation*}
    \frac{1}{n_0} \sum_{i=1}^{n_0}
    \varphi^{\otimes 2}(\tilde{X}_{i}; \htheta_{n_0}).
  \end{equation*}
\item 
  For $i=1,2,\dots,N$, calculate $\tilde S^{\mvrs}_i = \tilde\u\tp\varphi(X_i;
  \htheta_{n_0})$.
\item 
  For $j=1,\dots,k$, determine $\tilde I^{\mvrs}_j = \{i \mid\tilde S^{\mvrs}_{(j-1)}<\tilde S^{\mvrs}_i \le\tilde S^{\mvrs}_{(j)}\}$ with $\tilde S^{\mvrs}_{(j)}$ being the $j/k$ sample quantile of $\{\tilde S^{\mvrs}_i\}_{i=1}^N$ and $\tilde S^{\mvrs}_{(0)}=-\infty$, and calculate
  $\tilde\Pi_j = \sum_{i \in \tilde I^{\mvrs}_{j}}\tilde\pi_i$ and $\tilde n_j =
  \lfloor n \tilde\Pi_j + 0.5 \rfloor$.
  \end{enumerate}
\end{enumerate}
\textbf{Step 2: Subsampling and estimation}
\begin{enumerate}[2a):]
\item For $j = 1, \dots, k$, take $\tilde n_j$ observations with replacement
  from $\{X_i\}_{i \in \tilde I^{\mvrs}_{j}}$ according to subsampling
  distribution $\{\tilde\pi_i/\tilde\Pi_j\}_{i \in \tilde I^{\mvrs}_{j}}$,
  denoted as $X_{j,1}^*, X_{j,2}^*,\dots, X_{j,n_j}^*$.
\item Calculate
  \begin{equation}\label{eq:2}
    \htheta_n^{\mvrs} = \arg \max_{\theta} \sum_{j=1}^{k}
    \frac{1}{\tilde n_j} \sum_{i=1}^{\tilde n_j} \frac{\tilde\Pi_j}{\tilde\pi_{j,i}^*} l(X_{j,i}^*;
    \theta).
  \end{equation}
\end{enumerate}
\end{algorithm}
}

Step 1a') in Algorithm~\ref{str.alg} entails calculating the subsampling
probabilities, $\pi_i$, when they are not provided. This step is optional and
may be omitted if the $\pi_i$ are pre-specified, such as in uniform subsampling
where $\pi_i=N^{-1}$. Algorithm~\ref{str.alg} accommodates a wide range of
probability schemes, including leverage subsampling \citep{ma2015statistical}
and optimal subsampling \citep[e.g.,][]{wang2018optimal, wang2022sampling}.
Notably, generating optimal subsampling probabilities typically requires a pilot
estimator, $\htheta_{n_0}$, and the corresponding influence functions,
$\varphi(X_i;\htheta_{n_0})$. Consequently, the only additional computational
overhead of Algorithm~\ref{str.alg}—relative to standard optimal subsampling—is
incurred in Step 1b), which requires calculating the eigenvector $\u$, the
stratification variables $S_i$, and their sample quantiles.

We now analyze the computational complexity of the proposed MVRS. We assume that
$\log k < d$, which will be justified by our numerical result that a small $k$
is sufficient. Furthermore, we assume $nd < N$, reflecting the typical
subsampling regime where the sampling ratio $n/N$ approaches zero. Suppose that
parameter optimization employs a second-order iterative method (e.g., Newton's
method) with $\zeta$ iterations. The pilot estimation in Step 1a) solves
\eqref{eq:1} with complexity $O(\zeta_0 n_0 d^2)$; since $n_0 \ll N$, this cost
is negligible compared to $O(N)$. In Step 1b), calculating the stratification
variables $S_i$ requires linear time $O(Nd)$, and the stratification and
subsampling processes require $O(N \log k)$ time. The final estimation in Step 2
of solving \eqref{eq:2} has time complexity $O(\zeta n d^2)$. The total time
complexity depends on the cost of the optional Step 1a').  If Step 1a') requires
$O(Nd^2)$, the overall complexity of Algorithm~\ref{str.alg} is $O(\zeta_0n_0d^2
+ Nd^2 + N\log k + \zeta nd^2) = O(Nd^2)$; if Step 1a') requires linear time
$O(Nd)$, the total complexity is $O(Nd)$.  Notably, given any set of subsampling
probabilities, the MVRS adds only a linear overhead of $O(Nd)$, incurring no
significant additional computational cost.

\subsection{Theoretical Properties of the Practical
Estimator} \label{sec.mseest}

In Algorithm~\ref{str.alg}, the stratification variables $\tilde S^{\mvrs}_i$
are constructed using $\htheta_{n_0}$, and the subsampling probabilities
$\tilde\pi_i$ may rely on this pilot estimator as well such as in optimal
subsampling. Since this dependence on $\htheta_{n_0}$ can influence the
asymptotic behavior of the final result, we explicitly establish the theoretical
properties of the estimator $\hat\theta_n^{\mvrs}$ in Theorem~\ref{alg.thm}.

\begin{theorem}\label{alg.thm} Under Assumptions~\ref{asmp01}-\ref{asmp05}, if
  $\Exp\{\varphi^4(X;\theta)\}\leq \infty$ and the distribution function of
  $\u\tp\varphi(X;\theta)$ is continuous with positive density at its
  $j/k$-quantiles for $j=1, \dots, k-1$, then as $N, n, n_0 \rightarrow \infty$,
  \begin{equation*}
    \sqrt{n}(\V_{N}^{\mvrs})^{-1/2}(\hat\theta_n^{\mvrs}-\htheta_N)
    \rightarrow \Nor(0, \I)
  \end{equation*}
  in distribution, where
  \begin{align*}
    \V_{N}^{\mvrs}
    &=\Exp_{\Q_N}\left[
    \Var_{\Q_N}\bigg\{\frac{\ud\Pr_N}{\ud\Q_N}
      \varphi(X;\htheta_N)\bigg|S^{\mvrs}_{A^{\mvrs}}\bigg\}\right]\\
    &=\oneN\sum_{j=1}^k\sum_{i \in I^{\mvrs}_j}
      \frac{\pi_i}{N\Pi_j^2}
    \bigg\{\frac{\Pi_j}{\pi_i}\varphi(X_i;\htheta_N)
      -\sum_{i \in I^{\mvrs}_j}\varphi(X_i;\htheta_N)\bigg\}^{\otimes2}.
  \end{align*}
  Here, $S^{\mvrs}_{A^{\mvrs}}$ is defined as
  $S^{\mvrs}_{A^{\mvrs}}=j\mathbb{I}_{A^{\mvrs}_j}(S^{\mvrs})$ , where
  $A^{\mvrs}_j = (S^{\mvrs}_{(j-1)}, S^{\mvrs}_{(j)}]$ for $j=1, ..., k$, and
  $I^{\mvrs}_j=\{i \mid S^{\mvrs}_{(j-1)}<S^{\mvrs}_i \le S^{\mvrs}_{(j)}\}$.
\end{theorem}

If the number of strata is $k=1$, then $\htheta_n^{\mvrs}$ reduces to the
two-step non-uniform subsampling estimator without stratification, where the
subsampling probabilities $\tilde{\pi}_i$ are estimated using the pilot
estimator $\hat{\theta}_{n_0}$ rather than the full-data estimator $\htheta_N$.
Consequently, Theorem~\ref{alg.thm} reduces to a version of
Proposition~\ref{sub.lem} for the practical subsampling estimator.  It follows
that the practical MVRS estimator achieves an asymptotic variance no larger than
that of the corresponding practical subsampling estimator without
stratification. This result is analogous to Theorem~\ref{str.thm2} and is
verified by the inequality:
\begin{align*}
      \V_{N}^{\mvrs} -\V_{N}^{\sub}
  &=-\Var_{\Q_N}\left[\frac{\ud\Pr_N}{\ud\Q_N}\Exp_{\Pr_N}\left\{\varphi(X;\htheta_N)\Big|S^{\mvrs}_{A^{\mvrs}}\right\}\right]
  \leq 0.
\end{align*}

To quantify the uncertainty of $\hat\theta_n^{\mvrs}$ or to facilitate inference
(e.g., constructing confidence intervals), we must estimate its asymptotic
variance. The quantity $\V_{N}^{\mvrs}$ in Theorem~\ref{str.thm} cannot be
computed directly as it depends on the full-data estimator $\htheta_N$. We
provide a feasible estimator, $\hat{\V}_{N}^{\mvrs}$, using only the selected
subsample:
\begin{equation}\label{eq:3}
  \hat{\V}_{N}^{\mvrs}=\hat{H}_N^{-1}\hat{\Phi}^{\mvrs}_N (\hat{H}_N^{-1})\tp,
  \text{ where }
    \hat{H}_N=\oneN\sum_{j=1}^{k}\frac{\tilde \Pi_j}{\tilde n_j}\sum_{i=1}^{\tilde n_j} \frac{1}{\tilde \pi_{j,i}^*}\ddot{l}(X_{j,i}^*; \htheta_n^{\mvrs}),
\end{equation}
and
\begin{equation}\label{eq:4}
  \hat{\Phi}^{\mvrs}_N
  =\frac{n}{n-d}\frac{1}{N^2}\sum_{j=1}^k\frac{1}{\tilde n_j}\sum_{i=1}^{\tilde n_j}\tilde \Pi_j\left\{
        \frac{1}{\tilde \pi_{j,i}^*}\dot{l}(X_{j,i}^*;\htheta_n^{\mvrs})
        -\frac{1}{\tilde n_j}\sum_{i=1}^{\tilde n_j}\frac{1}{\tilde \pi_{j,i}^*}\dot{l}(X_{j,i}^*;\htheta_n^{\mvrs})\right\}^{\otimes2}.
\end{equation}
In \eqref{eq:4}, the term $n/(n-d)$ is a finite-sample correction for the
degrees-of-freedom, which is useful when the subsample size $n$ is not
significantly large relative to the parameter dimension $d$.

\section{Simulation}\label{sim} We conduct simulations to evaluate the
performance of the proposed MVRS method, applying it to both uniform and optimal
subsampling probabilities. Section~\ref{sec.log} and~\ref{sec.svm} present the
performance of different methods for logistic regression and support vector
machine. The estimation efficiency is compared against the corresponding
subsampling methods without stratification by examining the mean squared error
(MSE) of the resulting estimators. The impact of the number of strata is
investigated in Section~\ref{sim2}. In Section~\ref{sec.time} we record the
computational time for different approaches. Finally, Section~\ref{sim3}
assesses the accuracy of the estimated asymptotic covariance matrix
$\hat{\V}_{N}^{\str}$.

\subsection{Logistic Regression}\label{sec.log}
We generate i.i.d. full data $\DN=\{X_i=(z_i,y_i)\}_{i=1}^N$ from logistic
regression, where $z_i$ denotes the covariate vector and $y_i$ denotes the
response. The loss function is the negative log-likelihood as
\begin{equation*}
  l(X_i; \theta)
  =-y_i(\theta_0+\theta_1\tp z_i)+\log(1+e^{\theta_0+\theta_1\tp z_i}),
\end{equation*}
where $\theta=(\theta_0,\theta_1\tp)\tp$ is the vector of regression
coefficients, with $\theta_0$ representing the intercept and $\theta_1$ the
slope parameters.

We set the full data sample size to $N=5\times 10^5$ and the true regression
coefficients $\theta=0.1\times\mathbf{1}_{15}$, where $\mathbf{1}_{15}$ means a
vector of ones with dimension 15. We use the four covariate
distributions that are used in \cite{wang2018optimal} to generate
$z_i=(z_{i1},...,z_{i14})$:
\begin{enumerate}[{Case }1:]
\item The covariates $z_i \stackrel{i.i.d.}{\sim} \Nor(\textbf{0},\Sigma)$,
  where $\Sigma_{ij}=0.5^{\mathbb{I}(i\neq j)}$ and $\mathbb{I}(\cdot)$ is the
  indicator function.
\item The covariates $z_i \stackrel{i.i.d.}{\sim} \Nor(\textbf{1.5},\Sigma)$,
  where $\Sigma$ is the same as in Case 1.
\item The covariates
  $z_i \stackrel{i.i.d.}{\sim} \Nor(\textbf{0},\Sigma_u)$, where
  $\Sigma_{ij}=0.5^{\mathbb{I}(i\neq j)}/(ij)$. Components of $z_i$ have unequal
  variances in this case.
\item The components of $z_i$ are $z_{ij} \stackrel{i.i.d.}{\sim}
  \mathbb{EXP}(2)$.
\end{enumerate}

We consider subsample sizes of $n=\{1000, 1500, 2000, 2500\}$ and fix the pilot
subsample size at $n_0=500$.  
 We set the number of strata to $k=10$;
Section~\ref{sim2} demonstrates that this choice is sufficient to achieve
desirable estimation efficiency.

We compare the proposed method against two representative subsampling methods:
uniform subsampling (UNIF) and optimal subsampling (OPT) \citep{wang2018optimal,
wang2022sampling}. For UNIF, the subsampling probabilities are $\pi_i=N^{-1}$.
For OPT, the subsampling probabilities satisfy $\pi_i \propto
\|\varphi(X_i;\htheta_N)\|$. When integrating the MVRS scheme with UNIF and OPT,
we denote the corresponding subsampling methods as MVRS-U and MVRS-O,
respectively. To evaluate the performance of each estimator, we repeat the
simulation $R=1000$ times and calculate the empirical mean squared error as
$\operatorname{MSE}=R^{-1}\sum_{r=1}^R\| \htheta_{n,r}-\htheta_N\|^2$, where
$\htheta_{n,r}$ is the estimator obtained from the $r$-th simulation and
$\htheta_N$ is the full data estimator.

Figure~\ref{fig:log} displays the results for logistic regression. In all
settings, MVRS-U and MVRS-O consistently outperform their corresponding baseline
methods (UNIF and OPT), confirming that MVRS effectively enhances estimation
efficiency for a given baseline subsampling distribution. Notably, MVRS-U surpasses OPT,
the most efficient unstratified method, in some cases, which highlights the
substantial efficiency gains provided by the proposed MVRS. In Case 3, the
efficiency gain from MVRS is particularly pronounced.  Unlike the baseline OPT
method, which yields no significant improvement over the baseline UNIF, MVRS-U
achieves a notable increase in estimation efficiency.  
Ultimately, MVRS-O achieves superior performance
across all scenarios, indicating that combining OPT with MVRS yields the most
efficient estimator.

\begin{figure}[H]
    \centering
    \begin{subfigure}{0.45\linewidth}
        \centering
        \includegraphics[width=\linewidth]{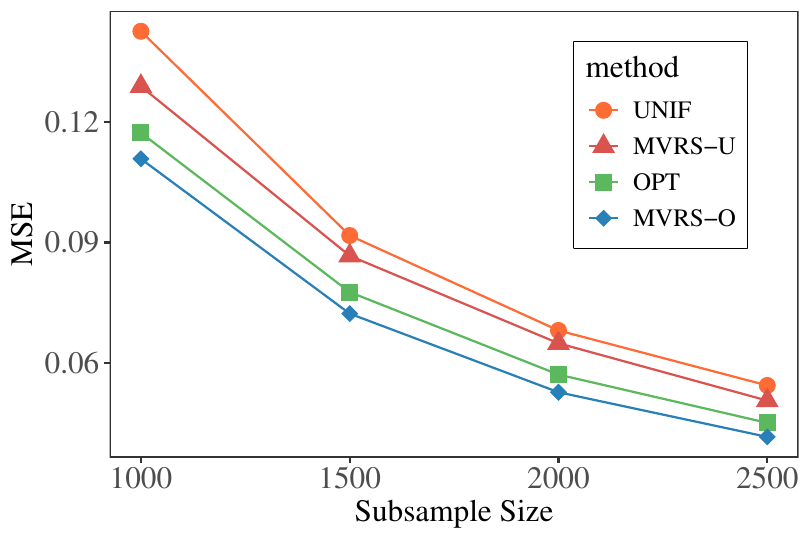}
        \caption{Case 1 (mzNormal)}
    \end{subfigure}
    \hspace{0.05\linewidth}
    \begin{subfigure}{0.45\linewidth}
        \centering
        \includegraphics[width=\linewidth]{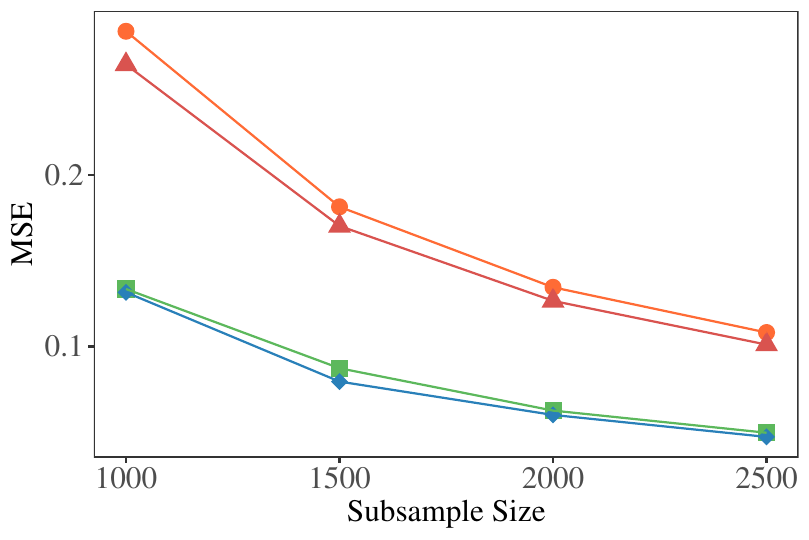}
        \caption{Case 2 (nzNormal)}
    \end{subfigure}
    \vfill
    \begin{subfigure}{0.45\linewidth}
        \centering
        \includegraphics[width=\linewidth]{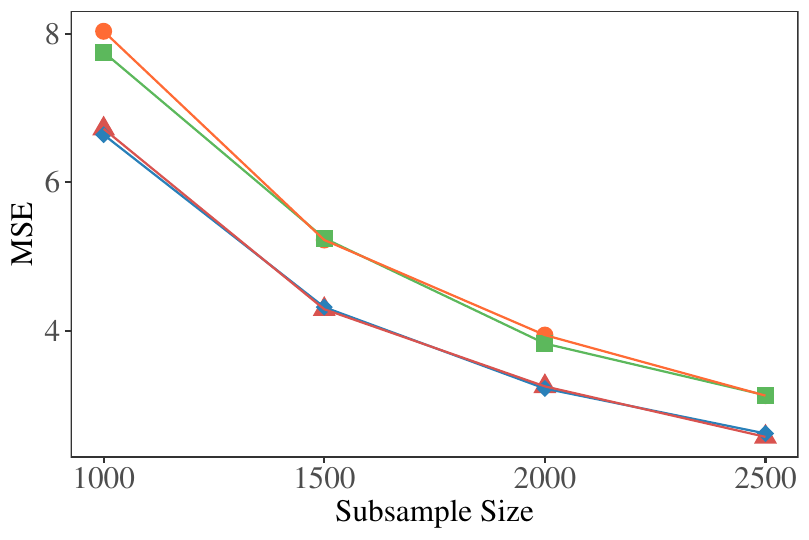}
        \caption{Case 3 (ueNormal)}
    \end{subfigure}
    \hspace{0.05\linewidth}
    \begin{subfigure}{0.45\linewidth}
        \centering
        \includegraphics[width=\linewidth]{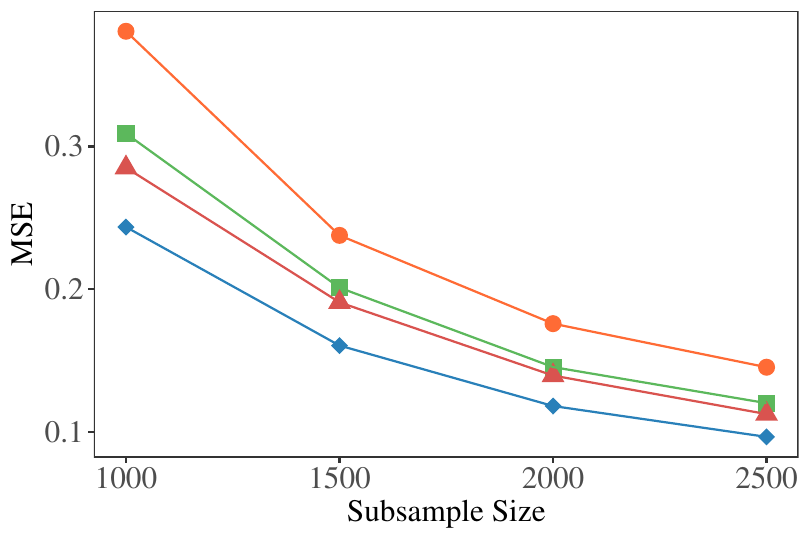}
        \caption{Case 4 (EXP)}
    \end{subfigure}
    \caption{MSEs for different subsample sizes $n$ in logistic regression.}
    \label{fig:log}
\end{figure}

To further evaluate the robustness of the proposed MVRS, we consider two
additional scenarios:
\begin{enumerate}[{Case }1:]
  \setcounter{enumi}{4}
\item Heterogeneous data. The covariates are generated from two different
  distributions: half of the covariates are generated from the
  normal distribution as in Case 1, and the other half are generated from a
  Binomial distribution $\mathbb{BIN}(1,0.5)$. 
\item Misspecified model. We generate the responses $y_i$ from a wrong model
  based on the covariate distribution in Case 5. Specifically, we adopt probit
  regression for the normally distributed covariates and logistic regression for
  the binomial covariates. The response generation follows a hybrid mechanism
  but we still use a logistic regression for estimation.
\end{enumerate}

\begin{figure}[H]
    \centering
    \begin{subfigure}{0.45\linewidth}
        \centering
        \includegraphics[width=\linewidth]{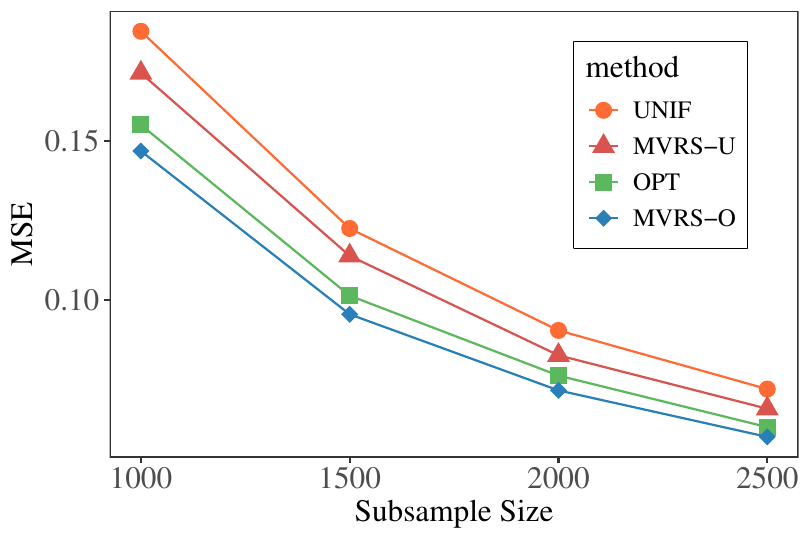}
        \caption{Case 5: Heterogeneous data}
    \end{subfigure}
    \hspace{0.05\linewidth}
    \begin{subfigure}{0.45\linewidth}
        \centering
        \includegraphics[width=\linewidth]{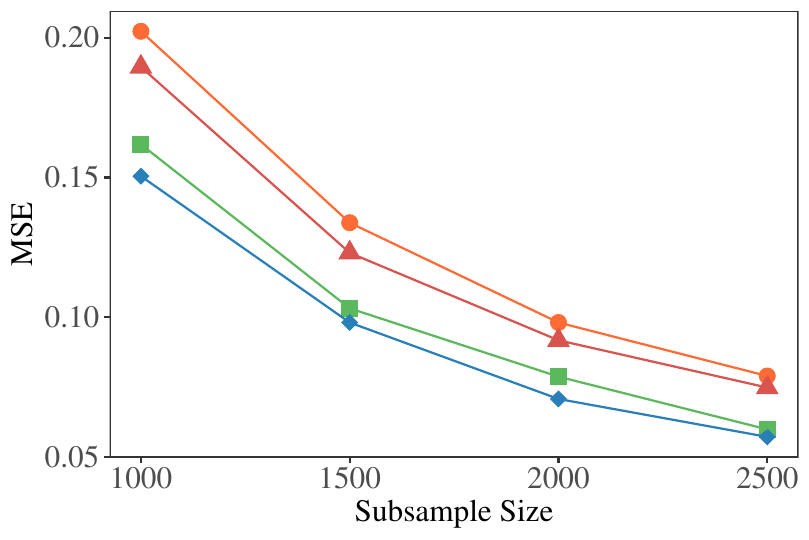}
        \caption{Case 6: Misspecified model}
    \end{subfigure}
    \caption{MSEs for heterogeneous data and misspecified model in logistic regression.}
    \label{fig:mis}
\end{figure}

Figure~\ref{fig:mis} presents the MSEs for Case 5 and Case 6.  The results
indicate that the proposed MVRS method maintains its effectiveness in boosting
the efficiency over baseline methods in both scenarios, demonstrating its
robustness to data heterogeneity and model misspecification.  Note that under
model misspecification, a true parameter no longer exists within the class of
logistic regression models. However, the full-data estimator $\htheta_N$ remains
a reasonable target because it converges to the least false limit $\theta_{KL}$,
which minimizes the Kullback--Leibler divergence between the logistic regression
model and the true data-generating process \citep{white1982maximum}. Our
real-world applications in Section~\ref{sec.cas} further confirm the robustness
and advantages of the proposed method in practical settings, where data
distributions are often more complex than simulated scenarios and the true
data-generating process is unknown.

We also conducted simulations evaluating information-based optimal subdata
selection (IBOSS) \citep{cheng2020information}.  Because the proposed MVRS
framework is specifically designed for random subsampling, these results are
provided in the supplementary material.

\subsection{Support Vector Machines}\label{sec.svm} 
To evaluate the performance of the proposed method in more complex scenarios, we
consider nonlinear support vector machines (SVMs). The full-data size is set to
$N = 5\times 10^5$ with $d=10$ baseline covariates, which are expanded to $p =
65$ predictors by including quadratic and interaction terms. The covariates are
generated from a normal distribution such that $z_{ij} \sim N((d+1-j)/2,
1/j^2)$, and the correlation between $z_{ij}$ and $z_{ik}$ is $0.5$ for $j \neq
k$.  The binary responses $y_i \in \{1, -1\}$ are generated according to four
different decision boundaries:
\begin{enumerate}[{Boundary }1:]
\item \begin{equation*}
0.1 \sum_{j=1}^{10} z_{ij} + 0.1 \sum_{k \in \{1,3,5,7,9\}} z_{i,k}z_{i,k+1} - C_1,
\end{equation*}
\item\begin{equation*}
0.1 \sum_{j=1}^{10} j \cdot z_{ij} + 0.1 \sum_{k \in \{1,3,5,7,9\}} z_{i,k}z_{i,k+1} - C_2,
\end{equation*}
\item\begin{equation*}
0.1 \sum_{j=1}^{10} z_{ij} + 0.1 \sum_{k \in \{1,3,5\}} \sin(z_{i,k}z_{i,k+1}) + 0.1 \sum_{l \in \{7,9\}} \exp(z_{i,l}z_{i,l+1}) - C_3,
\end{equation*}
\item\begin{equation*}
0.1 \sum_{j=1}^{10} z_{ij} + 0.1 \sum_{k \in \{1,3,5\}} \text{sign}(z_{i,k}z_{i,k+1}) - C_4,
\end{equation*}
\end{enumerate}
where $C_1, \dots, C_4$ are centering constants. We employ the squared hinge
loss as the objective function; all other experimental settings follow those
described for logistic regression. Figure~\ref{fig:svm} presents the results,
which show that the performance of the proposed MVRS is consistent with the
findings for logistic regression: MVRS-U and MVRS-O consistently outperform
their respective baseline methods (UNIF and OPT) across all decision boundaries.
The efficiency gains from MVRS are particularly pronounced for Boundary 3, which
features a more complex, nonlinear decision boundary. These results indicate
that MVRS can effectively enhance estimation efficiency in complex tasks as
well.

\begin{figure}[H]
    \centering
    \begin{subfigure}{0.45\linewidth}
        \centering
        \includegraphics[width=\linewidth]{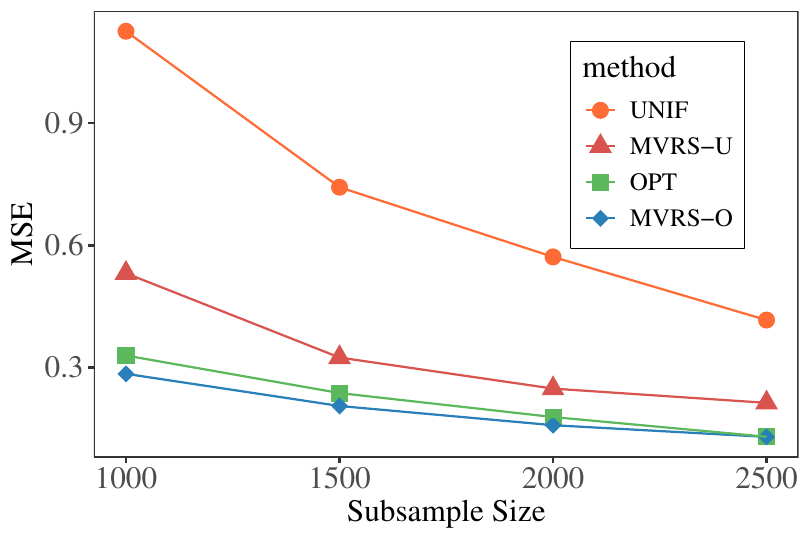}
        \caption{Boundary 1 (linear boundary)}
    \end{subfigure}
    \hspace{0.05\linewidth}
    \begin{subfigure}{0.45\linewidth}
        \centering
        \includegraphics[width=\linewidth]{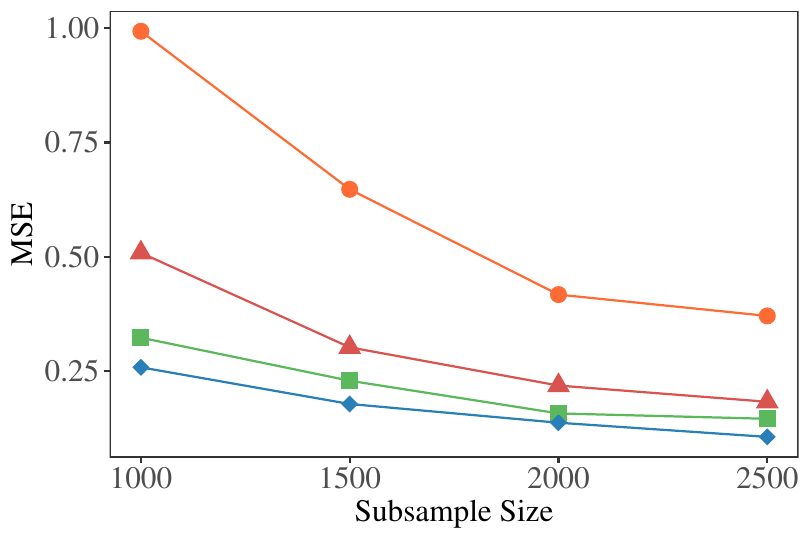}
        \caption{Boundary 2 (different weights)}
    \end{subfigure}
    \vfill
    \begin{subfigure}{0.45\linewidth}
        \centering
        \includegraphics[width=\linewidth]{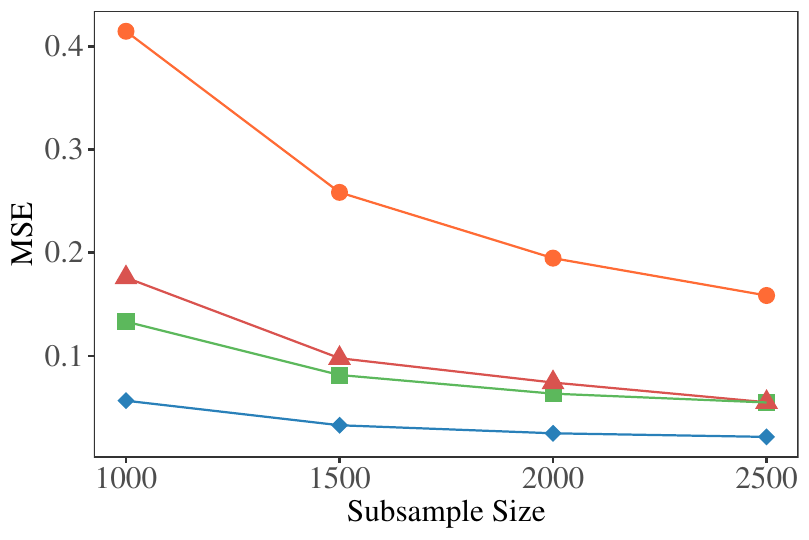}
        \caption{Boundary 3 (non-linear boundary)}
    \end{subfigure}
    \hspace{0.05\linewidth}
    \begin{subfigure}{0.45\linewidth}
        \centering
        \includegraphics[width=\linewidth]{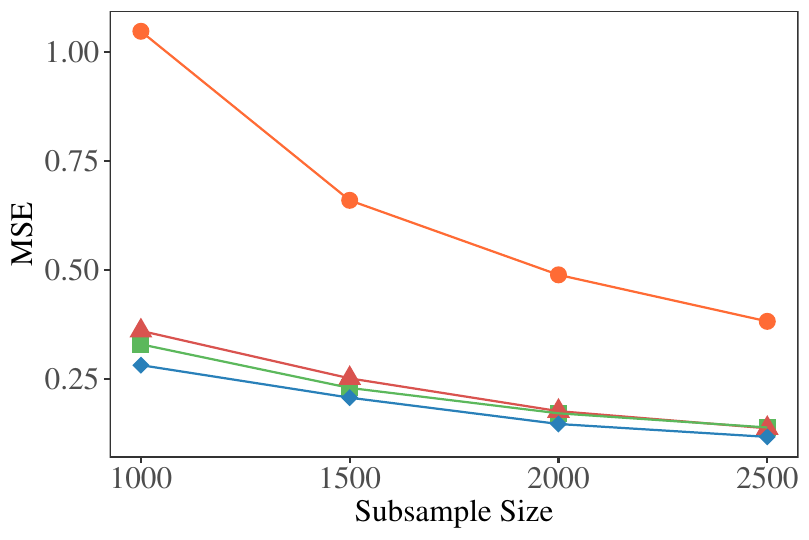}
        \caption{Boundary 4 (jump boundary)}
    \end{subfigure}
    \caption{MSEs for different subsample sizes $n$ in support vector machine.}
    \label{fig:svm}
\end{figure}

\subsection{Effects of the Number of Strata $k$}\label{sim2} We now investigate
the impact of the number of strata, $k$, on estimation efficiency.
Figure~\ref{fig:str} presents the results for logistic regression with stratum
counts $k \in \{1, 5, 10, 50, 100\}$ and subsample sizes of $n = 1000$. We see
that increasing the number of strata generally improves estimation accuracy,
which aligns with the theoretical derivation in
Section~\ref{sec:choice-strat-interv}. However, the marginal efficiency gain
diminishes as the number of strata increases beyond a certain threshold. The
estimation efficiency improves significantly as $k$ increases from $k=1$ (no
stratification) to $k=10$. While increasing $k$ to $k=50$ provides a slight
additional boost, the improvement is not substantial. Given that computational
complexity scales with the number of strata, a small $k$ offers a superior
tradeoff between estimation efficiency and computational cost. Consequently,
$k=10$ is a suitable choice for the considered simulation setup. Results for
other subsample sizes exhibited similar patterns and are omitted to save space.

\begin{figure}[H]
    \centering
    \begin{subfigure}{0.45\linewidth}
        \centering
        \includegraphics[width=\linewidth]{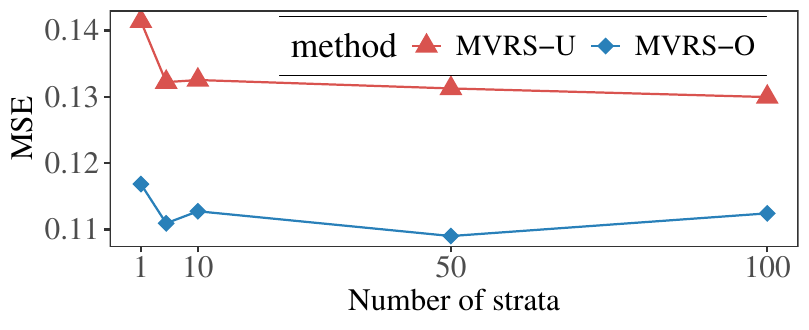}
        \caption{Case 1 (mzNormal)}
    \end{subfigure}
    \hspace{0.05\linewidth}
    \begin{subfigure}{0.45\linewidth}
        \centering
        \includegraphics[width=\linewidth]{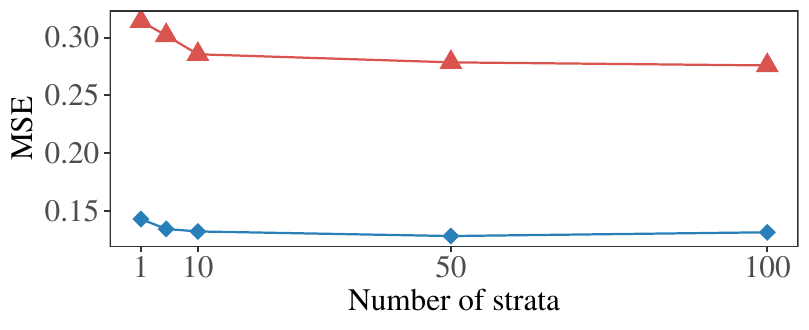}
        \caption{Case 2 (nzNormal)}
    \end{subfigure}
    \vfill
    \begin{subfigure}{0.45\linewidth}
        \centering
        \includegraphics[width=\linewidth]{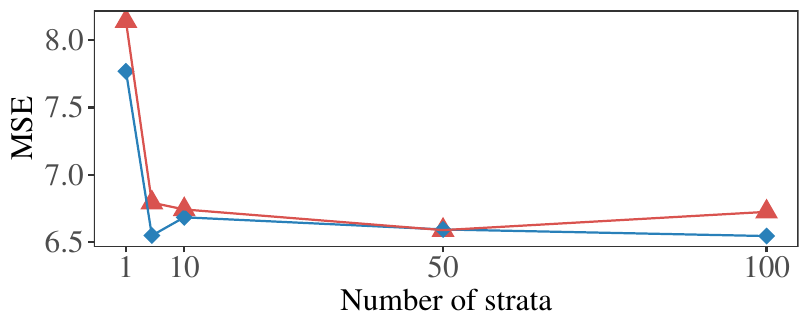}
        \caption{Case 3 (ueNormal)}
    \end{subfigure}
    \hspace{0.05\linewidth}
    \begin{subfigure}{0.45\linewidth}
        \centering
        \includegraphics[width=\linewidth]{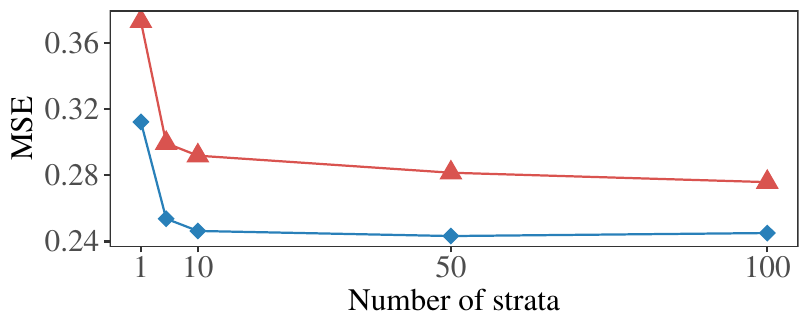}
        \caption{Case 4 (EXP)}
    \end{subfigure}
    \caption{MSEs for logistic regression with different numbers of strata $k$.}
    \label{fig:str}
\end{figure}

To further evaluate the performance of the proposed MVRS, we compare it with the
optimal stratification method discussed in
Section~\ref{sec:choice-strat-interv}. This benchmark method sets the number of
strata equal to the subsample size ($k=n$) and partitions the data using equal
probabilities (i.e., $\Q_N(S\in A_j)=1/n$) rather than equal numbers of
observations. Implementing this procedure requires ranking the entire dataset
and evaluating probability values, resulting in a higher computational cost.
Table~\ref{tab:mse} summarizes the MSEs for Case 1 of logistic regression, where
the optimal methods are denoted as OMVRS-U and OMVRS-O. The results demonstrate
that the MVRS attains performance comparable to the optimal stratification
method, indicating that the proposed stratification scheme is near-optimal while
requiring lower computational resources. Results for other cases are similar and
are omitted for brevity.

\begin{table}[H]
\centering
\caption{MSEs of the proposed methods and the optimal stratification estimator
  in Case 1.}
\label{tab:mse}
\resizebox{0.6\textwidth}{!}{
\begin{tabular}{c|c c c c}
    \hline
    \multirow{2}{*}{Method}& \multicolumn{4}{c}{MSE}\\
    & $n=1000$ & $n=1500$ & $n=2000$ & $n=2500$ \\
    \hline
    UNIF &0.142& 0.092& 0.068& 0.054 \\
    MVRS-U &0.129& 0.087& 0.065& 0.051 \\
    OMVRS-U &0.130 &0.086& 0.063& 0.050 \\
    OPT & 0.117& 0.078& 0.057& 0.045 \\
    MVRS-O &0.111& 0.072& 0.053& 0.042 \\
    OMVRS-O &0.110& 0.072& 0.054& 0.042 \\
    \hline
\end{tabular}
}
\end{table}

\subsection{Computation Time}\label{sec.time}
Tables~\ref{tab:time_log} and \ref{tab:time_svm} present the computation times
for the full data method (FULL) and the subsample methods ($n=2500$) across
varying numbers of strata. All simulations were implemented in the R programming
language \citep{R} on a laptop equipped with an Intel Core Ultra 9 185H
processor (2.30 GHz) and 32GB of RAM. We used the standard \texttt{order()}
function in R for the stratification step; while this is not the most efficient
implementation for the stratification step, the resulting runtimes remain
acceptable. As shown in Tables~\ref{tab:time_log} and \ref{tab:time_svm}, MVRS-U
and MVRS-O incur a slight computational overhead compared to the baseline UNIF
and OPT methods due to the stratification process, yet they remain significantly
faster than the full data analysis. 

For MVRS-U, no computational overhead is needed to calculate the values of
$\tilde\Pi_j$ because they are identical and known. Consequently, the
computation cost in Step 1 after stratification is negligible. As a result, the
computation time does not increase significantly with the number of strata.  For
MVRS-O, while computation time naturally increases with the number of strata,
the cost remains manageable even at $k=100$, demonstrating the practical
feasibility of the proposed method. The optimal stratification methods (OMVRS-U
and OMVRS-O) require more time than the corresponding MVRS methods due to
evaluating probabilities. However, their runtimes remain reasonable. Results for
other cases are omitted due to similarity.

Comparing the runtimes in Tables~\ref{tab:time_log} and \ref{tab:time_svm}
highlights that subsampling is particularly valuable for computationally
intensive models like SVM. For logistic regression, while subsampling is much
faster than full data analysis, the latter is already efficient enough that the
time savings may not be significant in practice. Conversely, full data analysis
for SVM is prohibitively slow, making the proposed subsampling methods essential
for feasible analysis. Furthermore, although the UNIF method remains faster than
other subsampling schemes, its relative advantage is less significant for SVM
than for logistic regression. This is because the estimation step (rather than
the sampling process) becomes the dominant computational cost in more complex
models.

\begin{table}[H]
 
\centering
\caption{  Computational times (in seconds) of different methods with $n=2500$ for
   Case 1 of logistic regression.}
\label{tab:time_log}
\resizebox{0.65\textwidth}{!}{
\begin{tabular}{c c c c c c c}
    \toprule
    Method & $k$ & Time (s) && Method & $k$ & Time (s) \\
    \midrule
    FULL & -- & 0.757 && & & \\ 
    \midrule
    UNIF & -- & 0.004 && OPT & -- & 0.232 \\
    \midrule
    \multirow{4}{*}{MVRS-U} & 5   & 0.099 && \multirow{4}{*}{MVRS-O} & 5   &
    0.276 \\
    & 10  & 0.096 &   && 10  & 0.275 \\
    & 50  & 0.100 &   && 50  & 0.276\\
    & 100 & 0.100 &   && 100 & 0.269 \\
    \midrule
    OMVRS-U & -- & 0.139 && OMVRS-O & -- & 0.309 \\
    \bottomrule
\end{tabular}
}
\end{table}

\begin{table}[H]
 
\centering
\caption{  Computational times (in seconds) of different methods with $n=2500$ for
   Boundary 1 of SVM.}
\label{tab:time_svm}
\resizebox{0.65\textwidth}{!}{
\begin{tabular}{c c c c c c c}
    \toprule
    Method & $k$ & Time (s) && Method & $k$ & Time (s) \\
    \midrule
    FULL & -- & 304.88 && & & \\ 
    \midrule
    UNIF & -- & 1.07 && OPT & -- & 3.63 \\
    \midrule
    \multirow{4}{*}{MVRS-U} & 5   & 3.59 && \multirow{4}{*}{MVRS-O} & 5   & 3.72
    \\
    & 10  & 3.59 &   && 10  & 3.76 \\
    & 50  & 3.58 &   && 50  & 3.72\\
    & 100 & 3.60 &   && 100 & 3.74 \\
    \midrule
    OMVRS-U & -- & 3.51 && OMVRS-O & -- & 3.68 \\
    \bottomrule
\end{tabular}
}
\end{table}

\subsection{Variance Estimation}\label{sim3} We evaluate the performance of the
variance estimator $\hat{\V}_{N}^{\mvrs}$, defined in Equation~\eqref{eq:3} of
Section~\ref{sec.mseest}. We report the results on for logistic regression and
omit the results for other cases because they exhibit similar patterns.
Figure~\ref{fig:mse_est} displays the means of the estimated MSEs based on
$\hat{\V}_{N}^{\mvrs}$ (dashed lines, labeled ``est'') alongside the empirical
MSEs (solid lines). The estimated MSEs are close to the empirical MSEs,
demonstrating that the proposed variance estimator provides a reliable method
for quantifying the uncertainty of the estimator.

\begin{figure}[H]
    \centering
    \begin{subfigure}{0.45\linewidth}
        \centering
        \includegraphics[width=\linewidth]{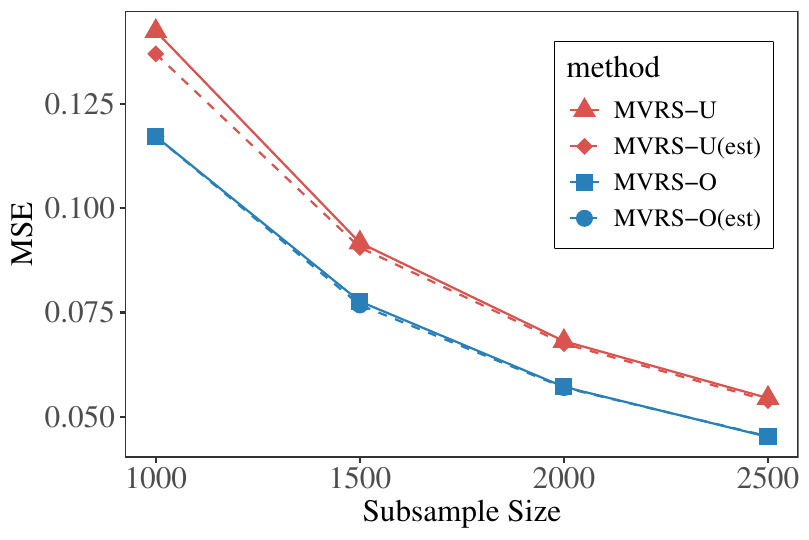}
        \caption{Case 1 (mzNormal)}
    \end{subfigure}
    \hspace{0.05\linewidth}
    \begin{subfigure}{0.45\linewidth}
        \centering
        \includegraphics[width=\linewidth]{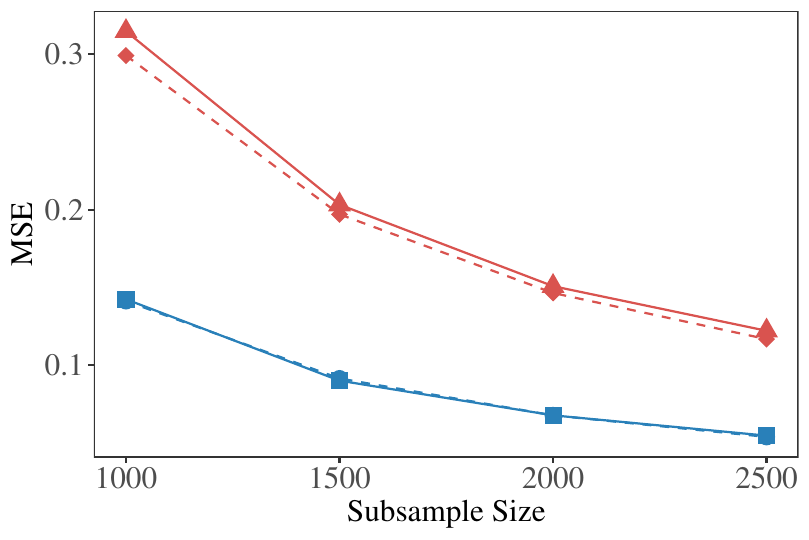}
        \caption{Case 2 (nzNormal)}
    \end{subfigure}
    \vfill
    \begin{subfigure}{0.45\linewidth}
        \centering
        \includegraphics[width=\linewidth]{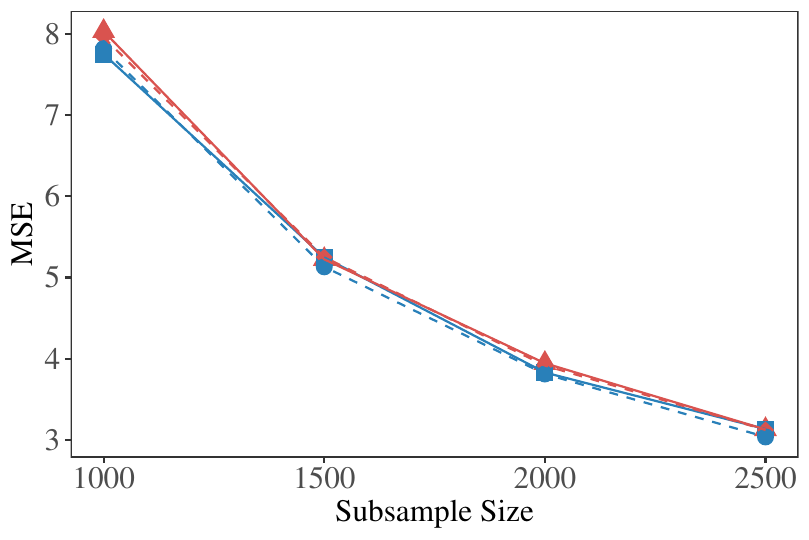}
        \caption{Case 3 (ueNormal)}
    \end{subfigure}
    \hspace{0.05\linewidth}
    \begin{subfigure}{0.45\linewidth}
        \centering
        \includegraphics[width=\linewidth]{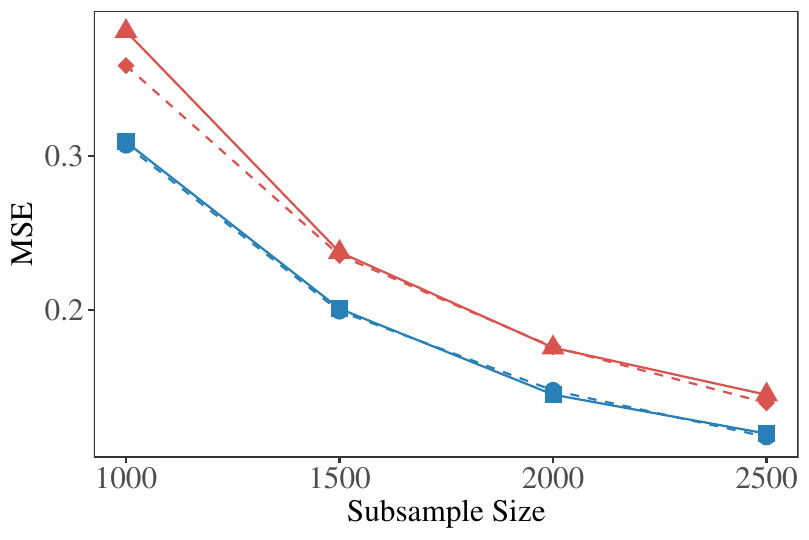}
        \caption{Case 4 (EXP)}
    \end{subfigure}
    \caption{Empirical and Estimated MSEs with different subsample sizes $n$ for logistic regression.}
    \label{fig:mse_est}
\end{figure}

\section{Real-World Data Example}
\label{sec.cas}
 
We evaluate the performance of the proposed MVRS subsampling method on
real-world data using the Supersymmetric (SUSY) benchmark dataset
\citep{susy_279} and the Covertype dataset \citep{covertype_31}. The SUSY
dataset is available on the UCI Machine Learning Repository
(\url{https://archive.ics.uci.edu/ml/datasets/SUSY}), and it contains $5 \times
10^6$ observations with 18 features, consisting of 8 low-level kinematic
properties and 10 high-level derived features. The goal is to distinguish
between signal processes that produce supersymmetric particles and background
processes. The binary version of the Covertype dataset is obtained from the
LIBSVM repository
(\url{https://www.csie.ntu.edu.tw/~cjlin/libsvmtools/datasets/binary.html}), and
it contains 581,012 observations with 54 covariates. The objective is to predict
forest cover types, specifically distinguishing the most prevalent class
(Spruce/Fir) from all others. Among the 54 covariates, 10 are quantitative
cartographic variables—such as elevation, slope, and distances to
hydrology—while the remaining 44 are binary indicators for wilderness areas and
soil types. In our analysis, we utilize the 10 quantitative features for
logistic regression, as the indicator variables are extremely sparse and take
zero values for the vast majority of observations.

We employ a logistic regression model for classification.
The subsample sizes are set to $n \in
\{1000, 1500, 2000, 2500\}$ with a pilot subsample size of $n_0=500$, and the
number of strata is set to $k=50$. We perform the proposed and comparative
methods over $R=1000$ replications and calculate the empirical MSE as
$\operatorname{MSE}=R^{-1}\sum_{r=1}^R\| \htheta_{n,r}-\htheta_N\|^2$, where
$\htheta_N$ denotes the full data estimator. 

Figure~\ref{fig:cas} presents the resulting empirical MSEs. Consistent with the
simulation results in Section~\ref{sim}, MVRS-U and MVRS-O outperform their
corresponding baseline methods (UNIF and OPT). MVRS-U achieves performance
comparable to and even better than OPT, illustrating the practical effectiveness of
the proposed method in real-world scenarios.
Moreover, the results demonstrate that the stratification strategy significantly
enhances estimation efficiency in real-world data analysis. This improvement
arises because real-world datasets seldom follow any assumed statistical model
strictly. Consequently, the design of optimal sampling probabilities can be
affected by model misspecification. The introduction of MVRS improves robustness
by selecting a subsample that is more representative of the full dataset's
underlying structure.

\begin{figure}[H]
    \centering
    \begin{subfigure}{0.45\linewidth}
        \centering
        \includegraphics[width=\linewidth]{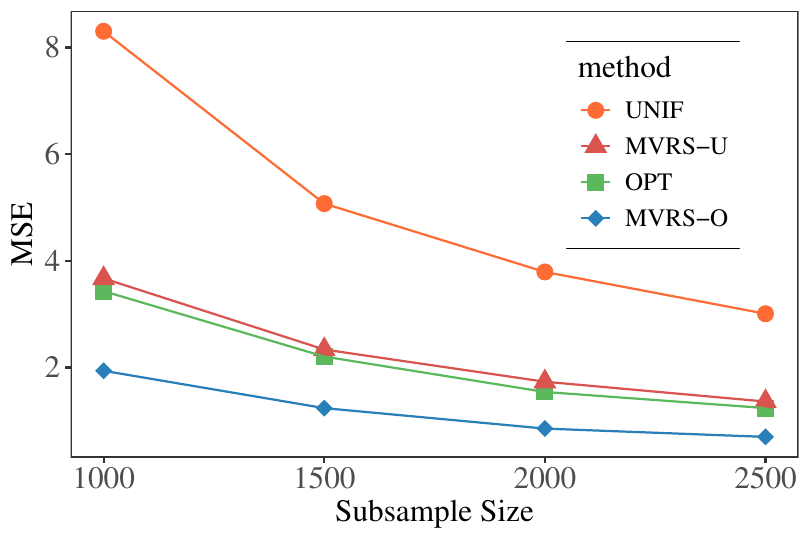}
        \caption{SUSY dataset}
    \end{subfigure}
    \hspace{0.05\linewidth}
    \begin{subfigure}{0.45\linewidth}
        \centering
        \includegraphics[width=\linewidth]{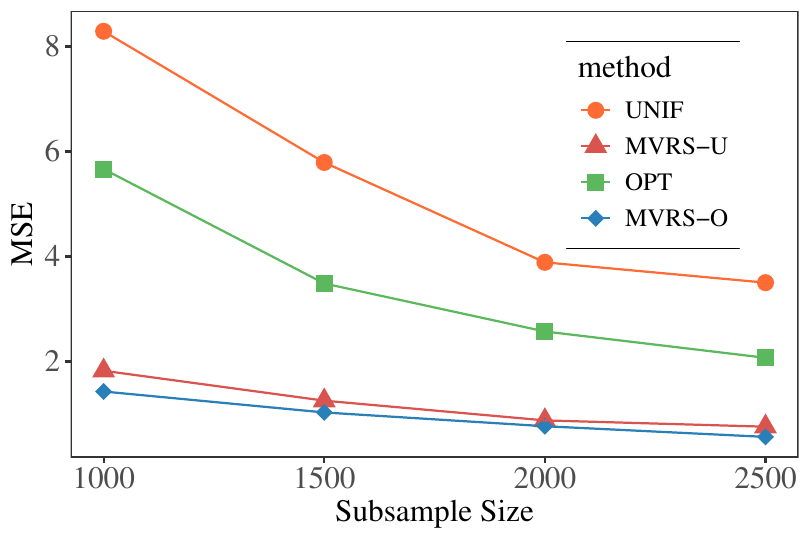}
        \caption{Covertype dataset}
    \end{subfigure}
    \caption{MSEs of different subsample sizes $n$ for case study.}
    \label{fig:cas}
\end{figure}

To provide further information, we calculate the empirical standard error,
$\operatorname{SE}=\sqrt{\operatorname{MSE}}$, and the empirical biases,
$\operatorname{Bias}=R^{-1}\sum_{r=1}^R\htheta_{n,r}-\htheta_N$, for all
components of $\theta$. Table~\ref{tab:cas2} presents the results with $n=1000$
for the Covertype dataset. The empirical biases are much smaller than the
empirical SEs, indicating that the proposed methods are nearly unbiased in terms
of approximating the full data estimator. 

To quantify the efficiency gain of the proposed MVRS methods, we include the SE
ratios relative to the corresponding baseline methods in Table~\ref{tab:cas2}.
Defined as the MVRS SE divided by the baseline SE, a ratio less than 1 indicates
improved efficiency.  While it may appear counter-intuitive that some ratios
exceed 1, this occurs only for components with inherently small SEs. Conversely,
for components with larger SEs, the ratios are consistently below 1.  For
example, MVRS-U and MVRS-O achieve significant efficiency gains for $\theta_{1}$
and $\theta_{8}$.  Because MVRS significantly reduces the error for
high-variance components, the overall MSE is lower than that of the baseline, as
shown in Figure~\ref{fig:cas}.  This aligns with the design of the MVRS scheme,
which targets the direction of the influence function with the maximal variance,
thereby mitigating the dominant sources of error.

\begin{table}[H]
   
\centering
\caption{  Empirical SEs and Biases of individual coefficients with $n=1000$ for
  the Covertype dataset. The SE ratio is calculated as the SE of MVRS divided
  by the SE of the corresponding baseline method.}
\label{tab:cas2}
\resizebox{1\textwidth}{!}{
\begin{tabular}{c|cc c cc c c|cc c cc c c}
    \hline
    \multirow{2}{*}{$\theta_i$}&
  \multicolumn{2}{c}{UNIF}&&\multicolumn{2}{c}{MVRS-U}&& &
  \multicolumn{2}{|c}{OPT}&& \multicolumn{2}{c}{MVRS-O}&&\\
    & SE & Bias && SE & Bias && SE Ratio & SE & Bias && SE & Bias &&SE Ratio\\
    \hline
1 & 4.270 & 0.014 && 0.465 & -0.028 && 0.109 & 3.051 & -0.098 && 0.329 & -0.016 && 0.108 \\
2 & 0.890 & -0.119 && 0.678 & -0.053 && 0.762 & 0.782 & -0.003 && 0.638 & -0.025 && 0.816 \\
3 & 0.025 & 0.004 && 0.022 & -0.007 && 0.870 & 0.031 & 0.002 && 0.026 & -0.002 && 0.835 \\
4 & 0.052 & -0.003 && 0.023 & 0.004 && 0.448 & 0.050 & 0.007 && 0.027 & -0.003 && 0.549 \\
5 & 0.020 & 0.007 && 0.021 & 0.014 && 1.002 & 0.030 & 0.007 && 0.028 & 0.004 && 0.917 \\
6 & 0.014 & 0.000 && 0.013 & -0.010 && 0.968 & 0.016 & -0.005 && 0.017 & 0.004 && 1.051 \\
7 & 0.019 & -0.001 && 0.018 & 0.000 && 0.962 & 0.028 & -0.003 && 0.026 & 0.003 && 0.922 \\
8 & 1.820 & 0.078 && 0.345 & 0.043 && 0.189 & 1.303 & 0.062 && 0.276 & 0.015 && 0.212 \\
9 & 0.478 & 0.021 && 0.149 & 0.028 && 0.312 & 0.340 & 0.030 && 0.130 & 0.023 & &0.381 \\
10 & 0.018 & 0.006 && 0.017 & 0.006 && 0.957 & 0.026 & 0.003 && 0.026 & -0.002 && 1.030 \\
    \hline
\end{tabular}
}
\end{table}

\section{Conclusion and Discussion}
\label{sec.con}
In this paper, we proposed the MVRS strategy to improve estimation efficiency
for existing subsampling distributions in M-estimation with large-scale data.
MVRS can be effectively combined with various subsampling methods to enhance
performance while incurring only an additional linear computational cost. Both
theoretical derivations and numerical experiments demonstrate the superiority of
the proposed method in terms of estimation accuracy. Although we presented MVRS
in the context of subsampling with replacement, the framework can be explicitly
extended to other random subsampling approaches, such as Poisson subsampling.

Several avenues remain for future research. First, we focused on M-estimation
within a parametric framework where the influence function is well-defined for
use as a stratification variable. Extending MVRS to nonparametric and
semiparametric regression problems, where the influence function may not be
readily available, warrants further investigation. Second, while we have applied MVRS
to improve the estimation efficiency of random subsampling, it would be interesting to study its integration with
design-based deterministic subsampling methods, such as IBOSS
\citep{wang2019information, cheng2020information} and orthogonal subsampling
\citep{wang2021orthogonal}. A key challenge in this direction is adjusting MVRS
to avoid the potential bias introduced by using response information for
stratification.   Finally, while MVRS targets the direction of maximum variance,
future work could explore alternative stratification schemes optimized for other
purposes such as better estimation of specific parameters of interest or
predictive performance.

\bibliographystyle{apalike}
\bibliography{ref}

\newpage
\appendix
\setcounter{equation}{0}
\renewcommand{\theequation}{A.\arabic{equation}}

\section{Proof of Theorem~\ref{str.thm}}\label{pf.thm1} We begin the proof of
Theorem~\ref{str.thm} by presenting a series of lemmas. Lemma~\ref{lem1} is a
general result used by Lemma~\ref{lem2} and Lemma~\ref{lem3}. Lemma~\ref{lem2}
states the consistency of the stratified estimator $\htheta_n^{\str}$.
Lemma~\ref{lem3} and Lemma~\ref{lem4} contain two key results for proving the
asymptotic normality of $\htheta_n^{\str}$. We will complete the proof of
Theorem~\ref{str.thm} based on these lemmas.
\begin{lemma}\label{lem1} Let $g(x): \mathbb{R}^d \rightarrow \mathbb{R}$ be a
  function that may be dependent on the full data $\DN$ and satisfies
  $N^{-1}\sumN g(X_i)^2=\Op$. Under Assumption~\ref{asmp05}, as $n\rightarrow
  \infty$ and $N\rightarrow \infty$,
    \begin{equation}\label{lem1.eq1}
        \frac{1}{N^2}\sum_{j=1}^k\sum_{i \in I_j}\frac{\pi_i}{\Pi_j^2} \left\{\frac{\Pi_j}{\pi_i}g(X_i)-\sum_{i \in I_j} g(X_i)\right\}^2=\Op.
    \end{equation}
\end{lemma}
\begin{proof}
    First, we show that the left hand side of \eqref{lem1.eq1} is bounded by
    \begin{equation*}
        \frac{1}{N^2}\sumN\pi_i\left\{\frac{1}{\pi_i}g(X_i)-\sumN g(X_i)\right\}^2.
    \end{equation*}
    According to direct calculation:

\begin{align*}
    &\sum_{j=1}^k\sum_{i \in I_j}\frac{\pi_i}{\Pi_j^2} \left\{\frac{\Pi_j}{\pi_i}g(X_i)-\sum_{i \in I_j} g(X_i)\right\}^2-\sumN\pi_i\left\{\frac{1}{\pi_i}g(X_i)-\sumN g(X_i)\right\}^2\\
    &=\sum_{j=1}^k\sum_{i \in I_j}\frac{\pi_i}{\Pi_j^2} \left\{\frac{\Pi_j}{\pi_i}g(X_i)-\sum_{i \in I_j} g(X_i)\right\}^2-\sum_{j=1}^k\sum_{i \in I_j}\pi_i\left\{\frac{1}{\pi_i}g(X_i)-\sumN g(X_i)\right\}^2\\
    &=\sum_{j=1}^k\sum_{i \in I_j}\pi_i\left[\left\{\frac{1}{\pi_i}g(X_i)-\frac{1}{\Pi_j}\sum_{i \in I_j} g(X_i)\right\}^2-\left\{\frac{1}{\pi_i}g(X_i)-\sumN g(X_i)\right\}^2\right]\\
    &=\sum_{j=1}^k\sum_{i \in I_j}\pi_i\left[\left\{\frac{1}{\Pi_j}\sum_{i \in I_j} g(X_i)\right\}^2+\left\{\sumN g(X_i)\right\}^2-2\left\{\frac{1}{\pi_i}g(X_i)\right\}\left\{\frac{1}{\Pi_j}\sum_{i \in I_j} g(X_i)+\sumN g(X_i)\right\}\right]\\
    &=-\sum_{j=1}^k\frac{1}{\Pi_j}\left\{\sum_{i \in I_j} g(X_i)\right\}^2-\left\{\sumN g(X_i)\right\}^2\\
    &\leq0.
\end{align*}

    On the other hand,
\begin{align*}
    \frac{1}{N^2}\sumN\pi_i\left\{\frac{1}{\pi_i}g(X_i)-\sumN g(X_i)\right\}^2&=\frac{1}{N^2}\sumN\frac{1}{\pi_i}g(X_i)^2-\left\{\oneN\sumN g(X_i)\right\}^2\\
    &\leq\oneN\max_{i}\left(\frac{1}{N\pi_i}\right)\sumN g(X_i)^2\\
    &=\Op,
\end{align*}
where the last equality holds according to Assumption~\ref{asmp05}. Therefore
\eqref{lem1.eq1} holds according to 
\begin{align*}
    0\leq \frac{1}{N^2}\sum_{j=1}^k\sum_{i \in I_j}\frac{\pi_i}{\Pi_j^2} \left\{\frac{\Pi_j}{\pi_i}g(X_i)-\sum_{i \in I_j} g(X_i)\right\}^2\leq 
    \frac{1}{N^2}\sumN\pi_i\left\{\frac{1}{\pi_i}g(X_i)-\sumN g(X_i)\right\}^2=\Op.
\end{align*}
\end{proof}

\begin{lemma}\label{lem2} Under Assumptions~\ref{asmp01}, \ref{asmp02}, and
  \ref{asmp05}, as $n$ and $N$ go to infinity, 
    \begin{equation*}
        \| \htheta_n^{\str}-\htheta_N\| = o_{p|\DN}(1),
    \end{equation*}
where $o_{p|\DN}(1)$ denotes convergence to zero in probability conditional on
$\DN$.
\end{lemma}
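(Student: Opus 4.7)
The plan is to verify the hypotheses of a standard M-estimator consistency theorem conditional on the full data $\DN$. Writing the stratified subsample objective (scaled by $1/N$) as
\begin{equation}
M_n^{\str}(\theta)=\oneN\sum_{j=1}^k\frac{\Pi_j}{n_j}\sum_{i=1}^{n_j}\frac{1}{\pi_{j,i}^*}l(X_{j,i}^*;\theta),
\end{equation}
and the full-data objective as $M_N(\theta)=\oneN\sumN l(X_i;\theta)$, it suffices to establish (i) $\htheta_N$ is a well-separated maximizer of $M_N$ as $N\to\infty$, and (ii) $\sup_{\theta\in\Theta}|M_n^{\str}(\theta)-M_N(\theta)|=\opf$. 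Compactness of $\Theta$ (Assumption~\ref{asmp01}) together with uniqueness of the population risk minimizer (Assumption~\ref{asmp02}) then combine with (i)--(ii) to deliver $\|\htheta_n^{\str}-\htheta_N\|=\opf$.

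For pointwise convergence, note that conditional on $\DN$ the stratum-$j$ subsample is i.i.d.\ with $\Pr(X_{j,i}^*=X_l\mid\DN)=\pi_l/\Pi_j$ for $l\in I_j$, and the $k$ strata are mutually independent. A direct calculation gives $\Exp[M_n^{\str}(\theta)\mid\DN]=M_N(\theta)$ (conditional unbiasedness), and
\begin{equation}
\Var[M_n^{\str}(\theta)\mid\DN]=\frac{1}{N^2}\sum_{j=1}^k\frac{\Pi_j}{n_j}\left\{\sum_{l\in I_j}\frac{l^2(X_l;\theta)}{\pi_l}-\frac{1}{\Pi_j}\Big(\sum_{l\in I_j}l(X_l;\theta)\Big)^2\right\}.
\end{equation}
Because $n_j=\lfloor n\Pi_j+0.5\rfloor$ is of order $n\Pi_j$, the summand equals, up to a factor $n^{-1}$, the algebraic expansion of the quantity bounded by Lemma~\ref{lem1} with $g(\cdot)=l(\cdot;\theta)$. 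Assumption~\ref{asmp02} supplies $\oneN\sumN l^2(X_i;\theta)=\Op$, so Lemma~\ref{lem1} yields $\Var[M_n^{\str}(\theta)\mid\DN]=n^{-1}\Opf$; Chebyshev then gives $M_n^{\str}(\theta)-M_N(\theta)=\opf$ for each fixed $\theta$.

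To lift pointwise to uniform convergence on the compact $\Theta$, I would use an $\epsilon$-net plus stochastic-equicontinuity argument. Assumption~\ref{asmp04} provides Lipschitz control of $\ddot l$ in $\theta$ with envelope $c(x)$ of finite second moment; successive integration yields an envelope $h(x)$ satisfying $|l(x;\theta_1)-l(x;\theta_2)|\leq h(x)\|\theta_1-\theta_2\|$ on $\Theta$ with $\Exp[h^2(X)]<\infty$, hence $\oneN\sumN h^2(X_i)=\Op$. Applying the same unbiasedness-plus-Lemma~\ref{lem1} argument to the weighted envelope $\oneN\sum_j(\Pi_j/n_j)\sum_i h(X_{j,i}^*)/\pi_{j,i}^*$ gives $\Opf$ control of the weighted Lipschitz constant, hence stochastic equicontinuity of $\theta\mapsto M_n^{\str}(\theta)$ conditional on $\DN$. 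Combining this with pointwise convergence at a finite $\epsilon$-net delivers (ii), while well-separation in (i) follows from a standard uniform LLN argument under Assumptions~\ref{asmp01}--\ref{asmp02}.

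The main obstacle I expect is the uniform-convergence step rather than the pointwise one. Pointwise consistency is essentially mechanical once conditional unbiasedness is paired with Lemma~\ref{lem1}. The uniform step requires pushing an envelope function $h$ through the stratified weighting while respecting Assumption~\ref{asmp05}, and it relies on the stratum-size approximation $n_j\asymp n\Pi_j$ holding uniformly across $j$; this is where the regularity condition $\min_j n\Pi_j\geq 1$ (as invoked in Theorem~\ref{str.thm2}) and the boundedness of $\max_i(N\pi_i)^{-1}$ from Assumption~\ref{asmp05} become indispensable.
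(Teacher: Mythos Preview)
Your pointwise step is exactly the paper's: conditional unbiasedness of $M_n^{\str}(\theta)$, Lemma~\ref{lem1} applied with $g(\cdot)=l(\cdot;\theta)$ to bound the conditional variance by $O_p(n^{-1})$, then Chebyshev. So the heart of the argument matches.

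Where you diverge is in upgrading pointwise to uniform convergence. You propose an $\epsilon$-net plus stochastic-equicontinuity argument, building a Lipschitz envelope $h$ by integrating the Lipschitz control on $\ddot l$ from Assumption~\ref{asmp04}. The paper takes a shorter route: it observes that under Assumption~\ref{asmp04} the Hessian $\ddot l(x;\theta)$ is positive definite, so both $M_N(\theta)$ and $M_n^{\str}(\theta)$ are convex in $\theta$; pointwise convergence of convex functions on a compact set then automatically yields the uniform convergence needed for Theorem~5.7 of van der Vaart. This sidesteps the envelope construction entirely.

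Your approach would work and is more general (it does not require convexity of the loss), but it is more labor-intensive, and the ``successive integration'' to manufacture $h$ with $\Exp[h^2(X)]<\infty$ needs a little care: you implicitly need a square-integrable bound on $\dot l(x;\theta_0)$ at some anchor point plus a uniform bound on $\ddot l$ over $\Theta$, which is obtainable from Assumptions~\ref{asmp03}--\ref{asmp04} and compactness but is not completely free. Note also that both your argument and the paper's actually invoke Assumption~\ref{asmp04}, even though the lemma statement lists only Assumptions~\ref{asmp01}, \ref{asmp02}, \ref{asmp05}; this is a minor inconsistency in the paper itself rather than in your proposal.
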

\begin{proof}
  Given the full data $\DN$, the randomness of the objective function in
  \eqref{str.est} is solely due to the subsampling process. Based on the
  sampling scheme, $X_{j,i}^*$ is a random sample from $\{X_i\}_{i \in I_j}$
  with selection probabilities $\{\frac{\pi_i}{\Pi_j}\}_{i \in I_{j}}$.
  Therefore for any $\theta \in \Theta$, the conditional expectation and
  variance of 
\begin{equation*}
    \oneN\sum_{j=1}^k\frac{1}{n_j}\sum_{i=1}^{n_j}\frac{\Pi_j}{\pi_{j,i}^*}l(X_{j,i}^*;\theta)
\end{equation*}
given $\DN$ are
\begin{align*}
    \Exp\left\{\oneN\sum_{j=1}^k\frac{1}{n_j}\sum_{i=1}^{n_j}\frac{\Pi_j}{\pi_{j,i}^*}l(X_{j,i}^*;\theta)\Bigg|\DN\right\}
    &=\oneN\sum_{j=1}^k\Exp\left\{\frac{\Pi_j}{\pi_{j,1}^*}l(X_{j,1}^*;\theta)\Bigg|\DN\right\}\\
    &=\oneN\sum_{j=1}^k\sum_{i \in I_j}\frac{\pi_i}{\Pi_j}\frac{\Pi_j}{\pi_i}l(X_i;\theta)=\oneN\sumN l(X_i;\theta)
\end{align*}
and
\begin{align*}
    \Var\left\{\oneN\sum_{j=1}^k\frac{1}{n_j}\sum_{i=1}^{n_j}\frac{\Pi_j}{\pi_{j,i}^*}l(X_{j,i}^*;\theta)\Bigg|\DN\right\}
    &=\frac{1}{N^2}\sum_{j=1}^k\frac{1}{n_j}Var\left\{\frac{\Pi_j}{\pi_{j,1}^*}l(X_{j,1}^*;\theta)\Bigg|\DN\right\}\\
    &=\frac{1}{N^2}\sum_{j=1}^k\frac{1}{n_j}\sum_{i \in I_j}\frac{\pi_i}{\Pi_j}\left\{\frac{\Pi_j}{\pi_i}l(X_i;\theta)-\sum_{i \in I_j} l(X_i;\theta)\right\}^2\\
    &=\frac{1}{nN^2}\sum_{j=1}^k\sum_{i \in I_j}\frac{\pi_i}{\Pi_j^2} \left\{\frac{\Pi_j}{\pi_i}l(X_i;\theta)-\sum_{i \in I_j} l(X_i;\theta)\right\}^2\\
    &=O_{p}\left(n^{-1}\right).
\end{align*}
The last equality above holds because of Assumption~\ref{asmp02} and
Lemma~\ref{lem1} with $g(x)=l(x;\theta)$. Then according to the Chebyshev's
inequality,
\begin{equation}\label{lem2.eq1}
    \oneN\sum_{j=1}^k\frac{1}{n_j}\sum_{i=1}^{n_j}\frac{\Pi_j}{\pi_{j,i}^*}l(X_{j,i}^*;\theta)-\oneN\sumN l(X_i;\theta)=o_{p|\DN}(1).
\end{equation}
Under Assumptions~\ref{asmp01} and \ref{asmp04}, the parameter space $\Theta$ is
compact and the risk function $N^{-1}\sumN l(X_i;\theta)$ has a unique minimum,
as it is continuous and convex. Therefore function
$\oneN\sum_{j=1}^k\frac{1}{n_j}\sum_{i=1}^{n_j}\frac{\Pi_j}{\pi_{j,i}^*}l(X_{j,i}^*;\theta)$
and $\oneN\sumN l(X_i;\theta)$ satisfy the condition of Theorem 5.7 of
\cite{van2000asymptotic}. Then according to the Theorem 5.7 of
\cite{van2000asymptotic} and Assumption~\ref{asmp01} we have proved
Lemma~\ref{lem2}.
\end{proof}

\begin{lemma}\label{lem3} Under Assumptions~\ref{asmp01}, \ref{asmp02},
    \ref{asmp04}, and \ref{asmp05}, as $n$ and $N$ go to infinity,
    \begin{equation}\label{lem3.eq1}
        \oneN\int_0^1\left[\sum_{j=1}^k\frac{\Pi_j}{n_j}\sum_{i=1}^{n_j}\frac{1}{\pi_{j,i}^*}\ddot{l}\{X_{j,i}^*;\htheta_N+\lambda(\htheta_n^{\str}-\htheta_N)\}\right]d\lambda -\oneN\sumN\ddot{l}(X_i;\htheta_N)=o_{p|\DN}(1).
    \end{equation}
\end{lemma}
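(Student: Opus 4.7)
The plan is to decompose the left-hand side of \eqref{lem3.eq1} as $T_1+T_2$, where
\begin{equation*}
T_1=\oneN\sum_{j=1}^{k}\frac{\Pi_j}{n_j}\sum_{i=1}^{n_j}\frac{1}{\pi_{j,i}^*}\ddot{l}(X_{j,i}^*;\htheta_N)-\oneN\sumN\ddot{l}(X_i;\htheta_N)
\end{equation*}
is a pure sampling-fluctuation term at the fixed pivot $\htheta_N$, and
\begin{equation*}
T_2=\oneN\int_0^1\sum_{j=1}^{k}\frac{\Pi_j}{n_j}\sum_{i=1}^{n_j}\frac{1}{\pi_{j,i}^*}\bigl[\ddot{l}(X_{j,i}^*;\htheta_N+\lambda(\htheta_n^{\str}-\htheta_N))-\ddot{l}(X_{j,i}^*;\htheta_N)\bigr]d\lambda
\end{equation*}
is the smoothness correction arising from the moving argument. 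I then show $T_1=\opf$ and $T_2=\opf$ componentwise, which gives the claim.

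For $T_1$, I would mimic the conditional-moment computation already performed in the proof of Lemma~\ref{lem2}, replacing $l$ by each $(p,q)$-entry $\ddot{l}_{pq}$. Inverse-probability weighting within each stratum immediately gives $\Exp[T_{1,pq}\mid\DN]=0$ and
\begin{equation*}
\Var[T_{1,pq}\mid\DN]=\frac{1}{nN^2}\sum_{j=1}^{k}\sum_{i\in I_j}\frac{\pi_i}{\Pi_j^2}\Bigl[\frac{\Pi_j}{\pi_i}\ddot{l}_{pq}(X_i;\htheta_N)-\sum_{i\in I_j}\ddot{l}_{pq}(X_i;\htheta_N)\Bigr]^2,
\end{equation*}
using $n_j=\lfloor n\Pi_j+0.5\rfloor\ge 1$. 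By Assumption~\ref{asmp04} and Markov's inequality, $\oneN\sumN\ddot{l}_{pq}^2(X_i;\htheta_N)=\Op$; Lemma~\ref{lem1} applied with $g=\ddot{l}_{pq}(\cdot;\htheta_N)$ (which, by the remark after Lemma~\ref{lem1}, is permitted to depend on $\DN$) then yields $\Var[T_{1,pq}\mid\DN]=O_p(n^{-1})$, and Chebyshev's inequality gives $T_1=\opf$.

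For $T_2$, I would invoke the $\theta$-Lipschitz control from Assumption~\ref{asmp04} to dominate each entry of the integrand by $c(X_{j,i}^*)\cdot\lambda\|\htheta_n^{\str}-\htheta_N\|$. Bringing absolute values inside and integrating in $\lambda$ gives
\begin{equation*}
|T_{2,pq}|\leq \tfrac12\|\htheta_n^{\str}-\htheta_N\|\cdot\oneN\sum_{j=1}^{k}\frac{\Pi_j}{n_j}\sum_{i=1}^{n_j}\frac{c(X_{j,i}^*)}{\pi_{j,i}^*}.
\end{equation*}
The rightmost factor is an inverse-probability-weighted average whose conditional mean equals $\oneN\sumN c(X_i)=\Op$ (from $\Exp[c^2(X)]<\infty$ and Markov), and, by exactly the same Lemma~\ref{lem1} argument as above with $g=c$, whose conditional variance is $O_p(n^{-1})$; hence it is $\Opf$. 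Combined with $\|\htheta_n^{\str}-\htheta_N\|=\opf$ from Lemma~\ref{lem2}, this forces $T_2=\opf$.

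The main obstacle I anticipate is handling $T_2$, because there the sampling randomness of the subsample, the data-driven perturbation $\htheta_n^{\str}-\htheta_N$, and the auxiliary integral over $\lambda$ are all entangled. The cleanest way to decouple them is to push absolute values inside the integral and sum, so that the uniform Lipschitz bound of Assumption~\ref{asmp04} factors $\|\htheta_n^{\str}-\htheta_N\|$ out as a multiplier and reduces the remainder to an inverse-probability-weighted sum that yields to Lemma~\ref{lem1} in exactly the same way as for $T_1$.
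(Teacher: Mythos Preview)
Your proposal is correct and follows essentially the same route as the paper's proof: the same $T_1/T_2$ decomposition, the same Lemma~\ref{lem1}-plus-Chebyshev argument for the fixed-pivot term, and the same Assumption~\ref{asmp04} Lipschitz bound combined with Lemma~\ref{lem2} for the smoothness correction. The only cosmetic differences are that the paper treats your $T_2$ first and bounds $\lambda\le1$ instead of integrating to obtain your sharper factor $\tfrac12$, neither of which affects the argument.
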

\begin{proof}
The proof of Lemma~\ref{lem3} is done by examining every element of the matrix.
We consider the $m_1$th row and $m_2$th column of the left side and for
convenience it is still denoted by
$\oneN\int_0^1\left[\sum_{j=1}^k\frac{\Pi_j}{n_j}\sum_{i=1}^{n_j}\frac{1}{\pi_{j,i}^*}\ddot{l}\{X_{j,i}^*;\htheta_N+\lambda(\htheta_n^{\str}-\htheta_N)\}\right]d\lambda
-\oneN\sumN\ddot{l}(X_i;\htheta_N)$. First, we divide the left hand side of
\eqref{lem3.eq1} into two parts as
\begin{align}\label{lem3.eq2}
    &\Bigg|\oneN\int_0^1\left[\sum_{j=1}^k\frac{\Pi_j}{n_j}\sum_{i=1}^{n_j}\frac{1}{\pi_{j,i}^*}\ddot{l}\{X_{j,i}^*;\htheta_N+\lambda(\htheta_n^{\str}-\htheta_N)\}\right]d\lambda -\oneN\sumN\ddot{l}(X_i;\htheta_N)\Bigg| \\ \notag
    &\leq \Bigg|\int_0^1\oneN\left[\sum_{j=1}^k\frac{\Pi_j}{n_j}\sum_{i=1}^{n_j}\frac{1}{\pi_{j,i}^*}[\ddot{l}\{X_{j,i}^*;\htheta_N+\lambda(\htheta_n^{\str}-\htheta_N)\}-\ddot{l}(X_{j,i}^*;\htheta_N)]\right]d\lambda\Bigg|\\ \notag
    &+\Bigg|\oneN\sum_{j=1}^k\frac{\Pi_j}{n_j}\sum_{i=1}^{n_j}\frac{1}{\pi_{j,i}^*}\ddot{l}(X_{j,i}^*;\htheta_N)-\oneN\sumN\ddot{l}(X_i;\htheta_N)\Bigg|.
\end{align}
The first part can be controlled by
\begin{align*}
    &\Bigg|\int_0^1\oneN\sum_{j=1}^k\frac{\Pi_j}{n_j}\sum_{i=1}^{n_j}\frac{1}{\pi_{j,i}^*}[\ddot{l}\{X_{j,i}^*;\htheta_N+\lambda(\htheta_n^{\str}-\htheta_N)\}-\ddot{l}(X_{j,i}^*;\htheta_N)]d\lambda\Bigg| \\
    &\leq \oneN\sum_{j=1}^k\frac{\Pi_j}{n_j}\sum_{i=1}^{n_j}\frac{1}{\pi_{j,i}^*}\Bigg|\ddot{l}\{X_{j,i}^*;\htheta_N+\lambda(\htheta_n^{\str}-\htheta_N)\}-\ddot{l}(X_{j,i}^*;\htheta_N)\Bigg|\\
    &\leq \left[\oneN\sum_{j=1}^k\frac{\Pi_j}{n_j}\sum_{i=1}^{n_j}\frac{1}{\pi_{j,i}^*}c(X_{j,i}^*)\right]\| \htheta_n^{\str}-\htheta_N\|,
\end{align*}
where the last inequality is due to Assumption~\ref{asmp04}. Since $c(x)$
satisfies the condition of Lemma~\ref{lem1}, similar as the proof
of~\eqref{lem2.eq1}, we have
\begin{align*}
    \oneN\sum_{j=1}^k\frac{\Pi_j}{n_j}\sum_{i=1}^{n_j}\frac{1}{\pi_{j,i}^*}c(X_{j,i}^*)-\oneN\sum_{j=1}^N c(X_{i})=o_{p|\DN}(1).
\end{align*}
Then the first part of~\eqref{lem3.eq2} is $o_{p|\DN}(1)$ according to
Lemma~\ref{lem2}. 

For the second part of~\eqref{lem3.eq2}, since $\ddot{l}(x;\htheta_N)$ also
satisfies the condition of Lemma~\ref{lem1}, similar as the proof
of~\eqref{lem2.eq1} we have 
\begin{equation*}
    \Bigg|\oneN\sum_{j=1}^k\frac{\Pi_j}{n_j}\sum_{i=1}^{n_j}\frac{1}{\pi_{j,i}^*}\ddot{l}(X_{j,i}^*;\htheta_N)-\oneN\sumN\ddot{l}(X_i;\htheta_N)\Bigg|=o_{p|\DN}(1).
\end{equation*}
\end{proof}

\begin{lemma}\label{lem4} Under Assumptions~\ref{asmp01}-\ref{asmp05}, as $n$
    and $N$ go to infinity,
    \begin{equation*}\label{lem4.eq1}
        \Phi^{\str}(\htheta_N)^{-\frac{1}{2}}\frac{\sqrt{n}}{N}\sum_{j=1}^k\frac{\Pi_j}{n_j}\sum_{i=1}^{n_j}\frac{1}{\pi_{j,i}^*}\dot{l}(X_{j,i}^*;\htheta_N)\xrightarrow{|\DN} \Nor(0,I),
    \end{equation*}
    where $\xrightarrow{|\DN}$ denotes convergence in distribution conditional
    on $\DN$.
\end{lemma}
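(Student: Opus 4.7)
The plan is to establish this conditional central limit theorem by viewing the left-hand side as a normalized sum of independent random variables given $\DN$ and applying a Lyapunov CLT for triangular arrays. I write
\begin{equation}
T_{n,N}=\frac{\sqrt{n}}{N}\sum_{j=1}^k\frac{\Pi_j}{n_j}\sum_{i=1}^{n_j}\frac{1}{\pi_{j,i}^*}\dot{l}(X_{j,i}^*;\htheta_N)=\sum_{j=1}^k\sum_{i=1}^{n_j}Y_{j,i},
\end{equation}
with $Y_{j,i}=\frac{\sqrt{n}\,\Pi_j}{N n_j \pi_{j,i}^*}\dot{l}(X_{j,i}^*;\htheta_N)$. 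Conditional on $\DN$, the $Y_{j,i}$ are independent across strata by the stratified design and i.i.d.\ within each stratum, which is the structural feature that makes a triangular-array CLT the natural tool.

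First, I verify that the conditional mean vanishes. A direct calculation gives
\begin{equation}
\Exp[Y_{j,1}\mid\DN]=\frac{\sqrt{n}\,\Pi_j}{N n_j}\sum_{i\in I_j}\frac{\pi_i}{\Pi_j}\cdot\frac{1}{\pi_i}\dot{l}(X_i;\htheta_N)=\frac{\sqrt{n}}{N n_j}\sum_{i\in I_j}\dot{l}(X_i;\htheta_N),
\end{equation}
so summing over the $n_j$ replicates within stratum $j$ and then over $j$ produces $(\sqrt{n}/N)\sumN\dot{l}(X_i;\htheta_N)$, which is zero by the first-order condition at the full-data estimator $\htheta_N$. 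Next I compute $\Var[T_{n,N}\mid\DN]$ using independence. Each within-stratum variance equals
\begin{equation}
\Var[Y_{j,1}\mid\DN]=\frac{n\Pi_j^2}{N^2 n_j^2}\sum_{i\in I_j}\frac{\pi_i}{\Pi_j}\left[\frac{\dot{l}(X_i;\htheta_N)}{\pi_i}-\frac{1}{\Pi_j}\sum_{i'\in I_j}\dot{l}(X_{i'};\htheta_N)\right]^{\otimes 2};
\end{equation}
multiplying by $n_j$, summing over $j$, and using $n_j=n\Pi_j+O(1)$, the total simplifies algebraically to $\frac{1}{N^2}\sum_j\sum_{i\in I_j}\frac{\pi_i}{\Pi_j^2}[\frac{\Pi_j}{\pi_i}\dot{l}(X_i;\htheta_N)-\sum_{i'\in I_j}\dot{l}(X_{i'};\htheta_N)]^{\otimes 2}$, which I identify as $\Phi^{\str}(\htheta_N)$.

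The third step is to verify a Lyapunov condition of order $2+\delta$. Because the conditional law of $X_{j,1}^*$ puts mass $\pi_{i'}/\Pi_j$ on each $X_{i'}$ with $i'\in I_j$,
\begin{equation}
\sum_{i=1}^{n_j}\Exp[\|Y_{j,i}\|^{2+\delta}\mid\DN]=n_j\left(\frac{\sqrt{n}\,\Pi_j}{N n_j}\right)^{2+\delta}\frac{1}{\Pi_j}\sum_{i\in I_j}\frac{\|\dot{l}(X_i;\htheta_N)\|^{2+\delta}}{\pi_i^{1+\delta}}.
\end{equation}
Using $n_j\approx n\Pi_j$ and summing over $j$, the total is bounded by a constant multiple of
\begin{equation}
n^{-\delta/2}\cdot\left[\max_i\frac{1}{N\pi_i}\right]^{1+\delta}\cdot\oneN\sumN\|\dot{l}(X_i;\htheta_N)\|^{2+\delta},
\end{equation}
which is $n^{-\delta/2}\cdot O_p(1)$ by Assumptions~\ref{asmp03} and~\ref{asmp05} and hence converges to zero in conditional probability. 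The conclusion follows from the Lyapunov CLT applied conditionally on $\DN$.

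The main obstacle is handling the conditional CLT carefully: the summands, the normalizing matrix $\Phi^{\str}(\htheta_N)$, and even the row counts $n_j$ all depend on the random $\DN$, so both the variance expansion and the Lyapunov bound must be verified in conditional probability rather than almost surely. The rounding discrepancy $|n_j-n\Pi_j|\le 1$ contributes only lower-order terms and is absorbed by the hypothesis $\min_j n\Pi_j\ge 1$ already imposed in Theorem~\ref{str.thm2}.
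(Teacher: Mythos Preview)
Your proposal is correct and follows essentially the same route as the paper: both arguments verify a Lindeberg/Lyapunov-type condition conditionally on $\DN$ using Assumption~\ref{asmp03} (the $(2+\delta)$-moment bound on $\dot l$) together with Assumption~\ref{asmp05} (the bound on $\max_i (N\pi_i)^{-1}$), and then invoke a triangular-array CLT. The only organizational difference is that the paper centers within each stratum via $\xi_{j,i}=\frac{1}{N}\frac{\Pi_j}{\pi_{j,i}^*}\dot l(X_{j,i}^*;\htheta_N)-\frac{1}{N}\sum_{i\in I_j}\dot l(X_i;\htheta_N)$, applies Lindeberg--Feller stratum by stratum, and then combines the $k$ independent limits, whereas you apply a single Lyapunov CLT to the full array and use the first-order condition $\sumN\dot l(X_i;\htheta_N)=0$ to zero out the overall mean. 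One small point: your Lyapunov bound is written for the \emph{uncentered} $Y_{j,i}$, while the CLT requires centered summands; since $\|\Exp[Y_{j,i}\mid\DN]\|^{2+\delta}\le \Exp[\|Y_{j,i}\|^{2+\delta}\mid\DN]$ by Jensen, your bound on $\Exp[\|Y_{j,i}\|^{2+\delta}\mid\DN]$ also controls the centered moment up to a constant, so the argument goes through.
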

\begin{proof}
Let 
\begin{equation*}
    \xi_{j,i}:=\oneN\frac{\Pi_j}{\pi_{j,i}^*}\dot{l}(X_{j,i}^*;\htheta_N)-\oneN\sum_{i \in I_j}\dot{l}(X_i;\htheta_N).
\end{equation*}
Note that $\xi_{j,i}$ are random variables related to the sampling process given
dataset $\DN$. Its conditional expectation and  variance are calculated as
follows:
\begin{equation*}
    \Exp(\xi_{j,i}|\DN)=0.
\end{equation*}
\begin{equation*}
    Var(\xi_{j,i}|\DN)=\frac{1}{N^2}\sum_{i \in I_j} \frac{\pi_i}{\Pi_j}\left\{\frac{\Pi_j}{\pi_i} \dot{l}(X_i;\htheta_N)-\sum_{i \in I_j}\dot{l}(X_i;\htheta_N)\right\}^{\otimes2}=\Op,
\end{equation*}
where the last equality holds according to Assumption~\ref{asmp03} and
Lemma~\ref{lem1} with $g(x)=\dot{l}(x;\htheta_N)$. Although we consider $g(X)
\in \mathbb{R}$ in Lemma~\ref{lem1}, the whole proof still holds for
multi-dimensional $g(X)$. Since $n_j=n\Pi_j$, we know $n_j\rightarrow \infty$ as
$n\rightarrow \infty$. 

First, we check the Lindeberg's condition and establish the asymptotic
normality of $\frac{1}{n_j}\sum_{i=1}^{n_j}\xi_{j,i}$ given $\DN$. For any
$\epsilon>0$,
\begin{align*}
    &\frac{1}{n_j}\sum_{i=1}^{n_j}\Exp\left\{\|\xi_{j,i}\|^2I_{(\|\xi_{j,i}\|>\sqrt{n_j}\epsilon)}\Big|\DN\right\}\\
    &\leq \frac{1}{n_j^{1+\delta/2}\epsilon^{\delta}}\sum_{i=1}^{n_j}\Exp\left\{\|\xi_{j,i}\|^{2+\delta}I_{(\|\xi_{j,i}\|>\sqrt{n_j}\epsilon)}\Big|\DN\right\}\\
    &\leq \frac{1}{n_j^{1+\delta/2}\epsilon^{\delta}}\sum_{i=1}^{n_j}\Exp\left(\|\xi_{j,i}\|^{2+\delta}\Big|\DN\right)\\
    &\leq\frac{1}{n_j^{\delta/2}\epsilon^{\delta}}\Exp\left(\|\xi_{j,1}\|^{2+\delta}\Big|\DN\right)\\
    &\leq\frac{1}{n_j^{\delta/2}\epsilon^{\delta}}\left[ \frac{1}{N^{2+\delta}}\Exp\left\{\frac{\Pi_j^{2+\delta}}{\pi_{j,i}^{*2+\delta}}\|\dot{l}(X_{j,i}^*;\htheta_N)\|^{2+\delta}\Big|\DN\right\}+\frac{1}{N^{2+\delta}}\sum_{i \in I_j}\|\dot{l}(X_i;\htheta_N)\|^{2+\delta}\right]\\
    &\leq\frac{\Pi_j^{1+\delta}}{n_j^{\delta/2}\epsilon^{\delta}}\max_{i}\left(\frac{1}{N\pi_i}\right)^{1+\delta}\oneN\sum_{i \in I_j}\|\dot{l}(X_i;\htheta_N)\|^{2+\delta}+\frac{1}{n_j^{\delta/2}\epsilon^{\delta}N^{2+\delta}}\sum_{i \in I_j}\|\dot{l}(X_i;\htheta_N)\|^{2+\delta}\\
    &=O_{p}\left(n_j^{-\delta/2}\right).
\end{align*}
Denote $Var(\xi_{j,i}|\DN)$ by $\Phi^{\str}_j$. According to the
Lindeberg-Feller Central Theorem, 
\begin{align*}
    \sqrt{n_j}(\Phi_j^{\str})^{-1/2}\frac{1}{n_j}\sum_{i=1}^{n_j}\xi_{j,i}\xrightarrow{|\DN}\Nor(0,I).
\end{align*}
Returning to \eqref{lem4.eq1}, we have
\begin{align*}
    &\frac{\sqrt{n}}{N}\sum_{j=1}^k\frac{\Pi_j}{n_j}\sum_{i=1}^{n_j}\frac{1}{\pi_{j,i}^*}\dot{l}(X_{j,i}^*;\htheta_N)=\sqrt{n}\sum_{j=1}^k\left(\frac{1}{n_j}\sum_{i=1}^{n_j}\xi_{j,i}\right)=\sum_{j=1}^k\frac{1}{\sqrt{\Pi_j}}\sqrt{n_j}\frac{1}{n_j}\sum_{i=1}^{n_j}\xi_{j,i}.
\end{align*}
Since sampling is independent across different strata,
\begin{equation*}
    \Phi^{\str}(\htheta_N)^{-\frac{1}{2}}\frac{\sqrt{n}}{N}\sum_{j=1}^k\frac{\Pi_j}{n_j}\sum_{i=1}^{n_j}\frac{1}{\pi_{j,i}^*}\dot{l}(X_{j,i}^*;\htheta_N)\xrightarrow{|\DN} \Nor(0,I).
\end{equation*}
\end{proof}

\begin{proof}[\bf Proof of Theorem~\ref{str.thm}]
According to Lemma~\ref{lem2}, $\htheta_n^{\str}$ is consistent to $\htheta_N$.
Therefore, we can apply Taylor's expansion to the target function in
\eqref{str.est} at point $\htheta_N$ as 
\begin{align*}
    0&=\oneN\sum_{j=1}^{k}\frac{\Pi_j}{n_j}\sum_{i=1}^{n_j} \frac{1}{\pi_{j,i}^*}\dot{l}(X_{j,i}^*; \htheta_n^{\str})\\
    &=\oneN\sum_{j=1}^{k}\frac{\Pi_j}{n_j}\sum_{i=1}^{n_j} \frac{1}{\pi_{j,i}^*}\dot{l}(X_{j,i}^*; \htheta_N)\\
    &+\left[\oneN\int_0^1\sum_{j=1}^k\frac{\Pi_j}{n_j}\sum_{i=1}^{n_j}\frac{1}{\pi_{j,i}^*}\ddot{l}\{X_{j,i}^*;\htheta_N+\lambda(\htheta_n^{\str}-\htheta_N)\}d\lambda\right](\htheta_n^{\str}-\htheta_N)\\
    &=\oneN\sum_{j=1}^{k}\frac{\Pi_j}{n_j}\sum_{i=1}^{n_j} \frac{1}{\pi_{j,i}^*}\dot{l}(X_{j,i}^*; \htheta_N)+\left\{\oneN\sumN\ddot{l}(X_i;\htheta_N)+o_{p|\DN}(1)\right\}(\htheta_n^{\str}-\htheta_N).
\end{align*}
Here, the first equality holds according to the definition of $\htheta_n^{\str}$
in \eqref{str.est}, and the last equality holds due to Lemma~\ref{lem3}. Based
on Assumption~\ref{asmp03}, rearranging the above equation gives
\begin{align*}
    \htheta_n^{\str}-\htheta_N=-\left\{\oneN\sumN\ddot{l}(X_i;\htheta_N)+o_{p|\DN}(1)\right\}^{-1}\oneN\sum_{j=1}^{k}\frac{\Pi_j}{n_j}\sum_{i=1}^{n_j} \frac{1}{\pi_{j,i}^*}\dot{l}(X_{j,i}^*; \htheta_N).
\end{align*}
According to Lemma~\ref{lem4} and Slutsky's theorem, we proved that 
\begin{align*}
    \htheta_n^{\str}-\htheta_N \xrightarrow{|\DN} \Nor\left(0, \V_{N}^{\str}\right),
\end{align*}
which means for any $x$
\begin{align*}
    \Pr\left\{(\V_{N}^{\str})^{-\frac{1}{2}}\sqrt{n}(\htheta_n^{\str}-\htheta_N)\leq x\Big|\DN\right\}\rightarrow \Phi(x),
\end{align*}
where $\Phi(x)$ is the cumulative distribution function of standard multivariate
normal distribution. Since the conditional probability is a bounded random
variable, according to the bounded convergence theorem, we have
\begin{align*} 
    &\Pr\left\{(\V_{N}^{\str})^{-\frac{1}{2}}\sqrt{n}(\htheta_n^{\str}-\htheta_N)\leq x\right\}\\
    &=\Exp\left[\Pr\left\{(\V_{N}^{\str})^{-\frac{1}{2}}\sqrt{n}(\htheta_n^{\str}-\htheta_N)\leq x\Big|\DN\right\}\right]\rightarrow \Phi(x),
\end{align*}
which finishes the proof of Theorem~\ref{str.thm}.
\end{proof}

\section{Proof of Theorem~\ref{str.thm2}}\label{pf.thm2} Now we give the proof
of Theorem~\ref{str.thm2}, which relies on the properties of conditional
variance.
\begin{proof}
According to Proposition~\ref{sub.lem} and Theorem~\ref{str.thm}, the asymptotic
variance matrices can be expressed as
\begin{align*}
    &\V_{N}^{\str}=\oneN\Exp_{\Q_N}\left[
    \Var_{\Q_N}\left\{\frac{\ud\Pr_N}{\ud\Q_N}
      \varphi(X;\theta)\Bigg|S_A\right\}\right], 
      \V_{N}^{\sub}=
    \oneN\Var_{\Q_N}\left\{\frac{\ud\Pr_N}{\ud\Q_N}\varphi(X;\theta)\right\}.
\end{align*}
By the properties of conditional variance,
\begin{align}
    \V_{N}^{\str}-\V_{N}^{\sub}&=\oneN\Exp_{\Q_N}\left[
    \Var_{\Q_N}\left\{\frac{\ud\Pr_N}{\ud\Q_N}
      \varphi(X;\theta)\Bigg|S_A\right\}\right]
    -\oneN\Var_{\Q_N}\left\{\frac{\ud\Pr_N}{\ud\Q_N}\varphi(X;\theta)\right\}\notag\\
    \label{thm2.eq1}
    &=-\oneN\Var_{\Q_N}\left[\Exp_{\Q_N}\left\{\frac{\ud\Pr_N}{\ud\Q_N}\varphi(X;\htheta_N)\Bigg|S_A\right\}\right].
\end{align}
Therefore the MSE reduction of $\htheta_n^{\str}$ can be guaranteed. For a
clearer view of the difference between these two asymptotic variance matrices, we
further calculate~\eqref{thm2.eq1}. Note that
\begin{align*}
\Exp_{\Q_N}\left\{\frac{\ud\Pr_N}{\ud\Q_N}\varphi(X;\htheta_N)\Bigg|S_A=j\right\}
=\frac{1}{\Q_N(S \in A_j)}\int \varphi(X;\htheta_N)\mathbb{I}(S \in A_j)\ud\Pr_N.
\end{align*}
We have
\begin{align*}
\Exp_{\Q_N}\left[\Exp_{\Q_N}\left\{\frac{\ud\Pr_N}{\ud\Q_N}\varphi(X;\htheta_N)\Bigg|S_A\right\}\right]
&=\sum_{j=1}^{k}\Q_N(S \in A_j)\Exp_{\Q_N}\left\{\frac{\ud\Pr_N}{\ud\Q_N}\varphi(X;\htheta_N)\Bigg|S_A=j\right\}\\
&=\sum_{j=1}^{k}\int \varphi(X;\htheta_N)\mathbb{I}(S \in A_j)\ud\Pr_N=\Exp_{\Pr_N}\{\varphi(X;\htheta_N)\},
\end{align*}
\begin{align*}
\Exp_{\Q_N}\left[\Exp^2_{\Q_N}\left\{\frac{\ud\Pr_N}{\ud\Q_N}\varphi(X;\htheta_N)\Bigg|S_A\right\}\right]
&=\sum_{j=1}^{k}\Q_N(S \in A_j)\Exp^2_{\Q_N}\left\{\frac{\ud\Pr_N}{\ud\Q_N}\varphi(X;\htheta_N)\Bigg|S_A=j\right\}\\
&=\sum_{j=1}^{k}\frac{1}{\Q_N(S \in A_j)}\left\{\int \varphi(X;\htheta_N)\mathbb{I}(S \in A_j)\ud\Pr_N\right\}^2.
\end{align*}
On the other hand,
\begin{align*}
\Exp_{\Pr_N}\left\{\varphi(X;\htheta_N)\Bigg|S_A=j\right\}
=\frac{1}{\Pr_N(S \in A_j)}\int \varphi(X;\htheta_N)\mathbb{I}(S \in A_j)\ud\Pr_N
\end{align*}
and
\begin{align*}
\Exp_{\Q_N}\left[\frac{\ud\Pr_N}{\ud\Q_N}\Exp_{\Pr_N}\left\{\varphi(X;\htheta_N)\Bigg|S_A\right\}\right]
&=\Exp_{\Pr_N}\left[\Exp_{\Pr_N}\left\{\varphi(X;\htheta_N)\Bigg|S_A\right\}\right]\\
&=\sum_{j=1}^{k}\int \varphi(X;\htheta_N)\mathbb{I}(S \in A_j)\ud\Pr_N=\Exp_{\Pr_N}\{\varphi(X;\htheta_N)\},
\end{align*}
\begin{align*}
&\Exp_{\Q_N}\left[\frac{d^2\Pr_N}{d^2\Q_N}\Exp^2_{\Pr_N}\left\{\varphi(X;\htheta_N)\Bigg|S_A\right\}\right]
=\sum_{j=1}^{k}\Q_N(S \in A_j)\frac{\Pr_N^2(S \in A_j)}{\Q_N^2(S \in A_j)}\Exp^2_{\Pr_N}\left\{\varphi(X;\htheta_N)\Bigg|S_A=j\right\}\\
&=\sum_{j=1}^{k}\Q_N(S \in A_j)\frac{\Pr_N^2(S \in A_j)}{\Q_N^2(S \in A_j)}\left\{\frac{1}{\Pr_N(S \in A_j)}\int \varphi(X;\htheta_N)\mathbb{I}(S \in A_j)\ud\Pr_N\right\}^2\\
&=\sum_{j=1}^{k}\frac{1}{\Q_N(S \in A_j)}\left\{\int \varphi(X;\htheta_N)\mathbb{I}(S \in A_j)\ud\Pr_N\right\}^2.
\end{align*}
Therefore
\begin{align*}
    &\Var_{\Q_N}\left[\Exp_{\Q_N}\left\{\frac{\ud\Pr_N}{\ud\Q_N}\varphi(X;\htheta_N)\Bigg|S_A\right\}\right]\\
    &=\Exp_{\Q_N}\left[\Exp^2_{\Q_N}\left\{\frac{\ud\Pr_N}{\ud\Q_N}\varphi(X;\htheta_N)\Bigg|S_A\right\}\right]-\Exp^2_{\Q_N}\left[\Exp_{\Q_N}\left\{\frac{\ud\Pr_N}{\ud\Q_N}\varphi(X;\htheta_N)\Bigg|S_A\right\}\right]\\
    &=\sum_{j=1}^{k}\frac{1}{\Q_N(S \in A_j)}\left\{\int \varphi(X;\htheta_N)\mathbb{I}(S \in A_j)\ud\Pr_N\right\}^2-\Exp^2_{\Pr_N}\{\varphi(X;\htheta_N)\}\\
    &=\Exp_{\Q_N}\left[\frac{d^2\Pr_N}{d^2\Q_N}\Exp^2_{\Pr_N}\left\{\varphi(X;\htheta_N)\Bigg|S_A\right\}\right]-\Exp^2_{\Q_N}\left[\frac{\ud\Pr_N}{\ud\Q_N}\Exp_{\Pr_N}\left\{\varphi(X;\htheta_N)\Bigg|S_A\right\}\right]\\
    &=\Var_{\Q_N}\left[{\frac{\ud\Pr_N}{\ud\Q_N}}\Exp_{\Pr_N}\left\{\varphi(X;\htheta_N)\Big|S_A\right\}\right].
\end{align*}
Finally we have
\begin{align*}
    \V_{N}^{\str}-\V_{N}^{\sub}
    &=-\oneN\Var_{\Q_N}\left[\Exp_{\Q_N}\left\{\frac{\ud\Pr_N}{\ud\Q_N}\varphi(X;\htheta_N)\Bigg|S_A\right\}\right]\\
    &=-\oneN\Var_{\Q_N}\left[{\frac{\ud\Pr_N}{\ud\Q_N}}\Exp_{\Pr_N}\left\{\varphi(X;\htheta_N)\Big|S_A\right\}\right] \leq 0.
\end{align*}
\end{proof}

\section{Proof of \eqref{diff.2}}\label{pf.diff.2}

\begin{proof}
Since
\begin{align*}
    &\Var_{\Q_N}\left\{\frac{\ud\Pr_N}{\ud\Q_N}\varphi(X;\htheta_N)\Bigg|S_A\right\}
    -\Exp_{\Q_N}\left[\Var_{\Q_N}\left\{\frac{\ud\Pr_N}{\ud\Q_N}\varphi(X;\htheta_N)\Bigg|S'_A,S_A\right\}\Bigg|S_A\right]\\
    &=\Var_{\Q_N}\left[\Exp_{\Q_N}\left\{\frac{\ud\Pr_N}{\ud\Q_N}\varphi(X;\htheta_N)\Bigg|S'_A,S_A\right\}\Bigg|S_A\right].
\end{align*}
Take the expectation on both sides with respect to $S_A$, we have
\begin{align*}
  &\Exp_{\Q_N}\left[\Var_{\Q_N}\left\{\frac{\ud\Pr_N}{\ud\Q_N}\varphi(X;\htheta_N)\Bigg|S_A\right\}\right]
  -\Exp_{\Q_N}\left[\Var_{\Q_N}\left\{\frac{\ud\Pr_N}{\ud\Q_N}\varphi(X;\htheta_N)\Bigg|S'_A,S_A\right\}\right]\\
  &=\Exp_{\Q_N}\left[\Var_{\Q_N}\left\{\frac{\ud\Pr_N}{\ud\Q_N}\Exp_{\Q_N}\left(\varphi(X;\htheta_N)\Big|S'_A,S_A\right)\Bigg|S_A\right\}\right].
\end{align*}
Therefore,
\begin{align*}
  &\V_{N}^{\str'}(\htheta_N)-\V_{N}^{\str}\\
  &=\oneN\Exp_{\Q_N}\left[\Var_{\Q_N}\left\{\frac{\ud\Pr_N}{\ud\Q_N}\varphi(X;\htheta_N)\Bigg|S'_A\right\}\right]
  -\oneN\Exp_{\Q_N}\left[\Var_{\Q_N}\left\{\frac{\ud\Pr_N}{\ud\Q_N}\varphi(X;\htheta_N)\Bigg|S_A\right\}\right]\\
  &=\oneN\Exp_{\Q_N}\left[\Var_{\Q_N}\left\{\frac{\ud\Pr_N}{\ud\Q_N}\varphi(X;\htheta_N)\Bigg|S'_A,S_A\right\}\right]
  -\oneN\Exp_{\Q_N}\left[\Var_{\Q_N}\left\{\frac{\ud\Pr_N}{\ud\Q_N}\varphi(X;\htheta_N)\Bigg|S_A\right\}\right]\\
  &=-\oneN\Exp_{\Q_N}\left[\Var_{\Q_N}\left\{\frac{\ud\Pr_N}{\ud\Q_N}\Exp_{\Q_N}\left(\varphi(X;\htheta_N)\Big|S'_A,S_A\right)\Bigg|S_A\right\}\right].
\end{align*}
It proves that the variance of the estimator decreases as the number of strata
increases.
\end{proof}

\section{Proof of Theorem~\ref{alg.thm}}
\begin{proof}
  Note that the calculated sampling probabilities $\tilde \pi_i$ may depend on
the pilot subsamples, denoted as $\D_{n_0}$. Similarly to the proof of
Theorem~\ref{str.thm}, we establish the conditional asymptotic normality of
$\hat\theta_n^{\mvrs}$ as
  \begin{equation*}
    \sqrt{n}(\tilde\V_{N}^{\mvrs})^{-1/2}(\hat\theta_n^{\mvrs}-\htheta_N)
    \xrightarrow{\mid \DN,\D_{n_0}} \Nor(0, \I),
  \end{equation*}
  in distribution as $N, n, n_0 \rightarrow \infty$, where
  \begin{align*}
\tilde\V_{N}^{\mvrs}
    &=\Exp_{\tilde \Q_N}\left[
    \Var_{\tilde \Q_N}\bigg\{\frac{\ud\Pr_N}{\ud\tilde \Q_N}
      \varphi(X;\htheta_N)\bigg|\tilde S^{\mvrs}_{\tilde A^{\mvrs}}\bigg\}\right]\\
    &=\oneN\sum_{j=1}^k\sum_{i \in \tilde I^{\mvrs}_j}
      \frac{\tilde \pi_i}{N\tilde \Pi_j^2}
    \bigg\{\frac{\tilde \Pi_j}{\tilde \pi_i}\varphi(X_i;\htheta_N)
      -\sum_{i \in \tilde I^{\mvrs}_j}\varphi(X_i;\htheta_N)\bigg\}^{\otimes2}.
\end{align*}
Here, $\tilde\Q_N=\sumN \tilde\pi_i \delta_{X_i}$ denotes the weighted empirical
  measure based on the pilot estimator. The stratification variable $\tilde S^{\mvrs}_{\tilde A^{\mvrs}}$ is defined as $\tilde S^{\mvrs}_{\tilde A^{\mvrs}_j}=j\mathbb{I}_{\tilde A^{\mvrs}_j}(\tilde S^{\mvrs})$, where
  $\tilde A^{\mvrs}_j=(\tilde S^{\mvrs}_{(j-1)}, \tilde S^{\mvrs}_{(j)}]$ for
  $j=1, ..., k$.

 For any $x$,
\begin{align*}
    \Pr\left\{(\tilde\V_{N}^{\mvrs})^{-\frac{1}{2}}\sqrt{n}(\hat\theta_n^{\mvrs}-\htheta_N)\leq x\Big|\DN, \D_{n_0}\right\}\rightarrow \Phi(x),
\end{align*}
where $\Phi(x)$ is the cumulative distribution function of the standard
multivariate normal distribution. Since the conditional probability is a bounded
random variable, according to the bounded convergence theorem, we have
\begin{align*} 
    &\Pr\left\{(\tilde\V_{N}^{\str})^{-\frac{1}{2}}\sqrt{n}(\htheta_n^{\str}-\htheta_N)\leq x\right\}\\
    &=\Exp_{\DN}\left[\Pr\left\{(\tilde\V_{N}^{\str})^{-\frac{1}{2}}\sqrt{n}(\htheta_n^{\str}-\htheta_N)\leq x\Big|\DN\right\}\right]\\
    &=\Exp_{\DN}\left(\Exp_{\D_{n_0}}\left[\Pr\left\{(\tilde\V_{N}^{\str})^{-\frac{1}{2}}\sqrt{n}(\htheta_n^{\str}-\htheta_N)\leq x\Big|\DN, \D_{n_0}\right\}\right] \right)
    \rightarrow \Phi(x).
\end{align*}
Therefore, we have
\begin{equation*}
    \sqrt{n}(\tilde\V_{N}^{\mvrs})^{-1/2}(\hat\theta_n^{\mvrs}-\htheta_N)
    \rightarrow \Nor(0, \I).
  \end{equation*}
We rewrite the asymptotic variance of $\hat\theta_n^{\mvrs}$ as 
\begin{align}
\label{eq.vmvrs}
\tilde\V_{N}^{\mvrs}
=&\oneN\sum_{j=1}^k\sum_{i \in \tilde I^{\mvrs}_j}\frac{1}{N\tilde \pi_i}\varphi^2(X_i;\htheta_N)+\oneN\sum_{j=1}^k\sum_{i \in \tilde I^{\mvrs}_j}\frac{1}{\tilde \Pi_j}\varphi(X_i;\htheta_N)\oneN\sum_{i \in
   \tilde I^{\mvrs}_j}\varphi(X_i;\htheta_N) \notag\\
&+\sum_{j=1}^k\left\{\oneN\sum_{i \in \tilde I^{\mvrs}_j}\varphi(X_i;\htheta_N)\right\}^2.
\end{align}
We analyze these three terms separately. 

First, we focus on the convergence of
\begin{align*}
  \oneN\sum_{i \in \tilde I^{\mvrs}_j}\varphi(X_i;\htheta_N)-\oneN\sum_{i \in I^{\mvrs}_j}\varphi(X_i;\htheta_N).
\end{align*}  
Define
\begin{align*}
  &L_j=[\min\{\tilde S^{\mvrs}_{(j-1)}, S^{\mvrs}_{(j-1)}\}, \max\{\tilde S^{\mvrs}_{(j-1)}, S^{\mvrs}_{(j-1)}\}],\\
  &U_j=[\min\{\tilde S^{\mvrs}_{(j)}, S^{\mvrs}_{(j)}\}, \max\{\tilde S^{\mvrs}_{(j)}, S^{\mvrs}_{(j)}\}].
\end{align*}
We have
\begin{align*}
&\frac{1}{N}\sumN\{\mI_{\tilde A^{\mvrs}_j}(\tilde S^{\mvrs}_i)-\mI_{A^{\mvrs}_j}(\tilde S^{\mvrs}_i)\}\|\varphi(X_i;\htheta_N)\|\\
&\leq \sqrt{\oneN\sumN\|\varphi(X_i;\htheta_N)\|^2}
  \sqrt{\oneN\sumN\{\mI_{\tilde A^{\mvrs}_j}(\tilde S^{\mvrs}_i)-\mI_{A^{\mvrs}_j}(\tilde S^{\mvrs}_i)\}^2}\\
&=\sqrt{\oneN\sumN\|\varphi(X_i;\htheta_N)\|^2}
  \sqrt{\oneN\sumN\{\mI_{L_j}(\tilde S^{\mvrs}_i)+\mI_{U_j}(\tilde S^{\mvrs}_i)\}}\\
&=\sqrt{\oneN\sumN\|\varphi(X_i;\htheta_N)\|^2}
  \sqrt{\Pr(\tilde S^{\mvrs} \in L_j)+\Pr(\tilde S^{\mvrs} \in U_j)+\op},
\end{align*}
where the last equation holds due to the Glivenko-Cantelli theorem
\citep{shorack2009empirical}. 

Then we consider the intervals $L_j$ and $U_j$. Denote the stratification
variable with the true parameter by $S^{\mvrs}_0=\u\tp\varphi(X;\theta)$. Since 
\begin{align*}
\|\varphi(X;\htheta_{n_0})-\varphi(X;\theta)\|\leq \sup_{\theta' \in [\hat{\theta}_{n_0},\theta]}\|\dot{\varphi}(X;\theta')\|\|\htheta_{n_0}-\theta\|,
\end{align*}
we have $\mid\tilde S^{\mvrs}-S_0^{\mvrs}\mid=\op$ according to
Assumption~\ref{asmp04} and the fact that $\|\htheta_{n_0}-\theta\|=\op$ as $N,
n_0\rightarrow \infty$. Since the density function $S_0^{\mvrs}$ is
positive at the $j/k$-quantiles (for $j=1, ..., k-1$), the $j/k$-quantiles of $\tilde S^{\mvrs}$
converge to those of $S_0^{\mvrs}$ and the sample quantiles converge to the
population quantiles for $\tilde S^{\mvrs}$. Therefore we have $\tilde S^{\mvrs}_{(j)}$ converges to the $j/k$-quantile of $S_0^{\mvrs}$ for $j=1, ...,
k-1$. Similarly, we have $S^{\mvrs}_{(j)}$ also converges to the $j/k$-quantile
of $S_0^{\mvrs}$ for $j=1, ..., k-1$. Thus the widths of the intervals $L_j$ and
$U_j$ converge to $0$ in probability.

Since $\mid\tilde S^{\mvrs}-S_0^{\mvrs}\mid=o_{P}(1)$, we have  
\begin{align*}
    \Pr(\tilde S^{\mvrs} \in L_j)=\Pr(S_0^{\mvrs} \in L_j)+\op=\op+\op
\end{align*}
under the assumption that the distribution function of $S_0^{\mvrs}$ is
continuous at the $j/k$-quantiles for $j=1, ..., k-1$. Additionally with
Assumption~\ref{asmp03}, we have proved that 
\begin{align*}
\frac{1}{N}\sumN\{\mI_{\tilde A^{\mvrs}_j}(\tilde S^{\mvrs}_i)-\mI_{A^{\mvrs}_j}(\tilde S^{\mvrs}_i)\}\|\varphi(X_i;\htheta_N)\|=\op.
\end{align*}
Therefore
\begin{align}
  \frac{1}{N}\sum_{i \in \tilde I^{\mvrs}_j}\varphi(X_i;\htheta_N)
  &=\frac{1}{N}\sumN\mI_{\tilde A^{\mvrs}_j}(\tilde S^{\mvrs}_i)\varphi(X_i;\htheta_N)\notag\\
  &=\frac{1}{N}\sumN\mI_{A^{\mvrs}_j}(\tilde S^{\mvrs}_i)\varphi(X_i;\htheta_N)+\op.\label{eq.D.1}
\end{align}
Consider
\begin{align*}
&\frac{1}{N}\sumN\{\mI_{A^{\mvrs}_j}(\tilde S^{\mvrs}_i)-\mI_{A^{\mvrs}_j}(S^{\mvrs}_i)\}\|\varphi(X_i;\htheta_N)\|\\
&\leq \sqrt{\oneN\sumN\|\varphi(X_i;\htheta_N)\|^2}
  \sqrt{\oneN\sumN\{\mI_{A^{\mvrs}_j}(\tilde S^{\mvrs}_i)-\mI_{A^{\mvrs}_j}(S^{\mvrs}_i)\}^2}\\
&\leq \sqrt{\oneN\sumN\|\varphi(X_i;\htheta_N)\|^2}\\
&\times\sqrt{\oneN\sumN\{\mI_{A^{\mvrs}_j}(\tilde S^{\mvrs}_i)-\mI_{A^{\mvrs}_j}(S^{\mvrs}_{0,i})\}^2+\oneN\sumN\{\mI_{A^{\mvrs}_j}(S^{\mvrs}_i)-\mI_{A^{\mvrs}_j}(S^{\mvrs}_{0,i})\}^2}
\end{align*}
Since we have $\mid\tilde S^{\mvrs}-S_0^{\mvrs}\mid=\op$ and $\mid
S^{\mvrs}-S_0^{\mvrs}\mid=\op$, the last two terms are both $\op$ according to
the assumption that the distribution function of $S_0^{\mvrs}$ is continuous at
the $j/k$-quantiles for $j=1, ..., k-1$.

Therefore 
\begin{align}
  \frac{1}{N}\sum_{i \in \tilde I^{\mvrs}_j}\varphi(X_i;\htheta_N)
  &=\frac{1}{N}\sumN\mI_{A^{\mvrs}_j}(\tilde S^{\mvrs}_i)\varphi(X_i;\htheta_N)+\op\notag\\
  \label{eq.D.2}
  &=\frac{1}{N}\sumN\mI_{A^{\mvrs}_j}(S^{\mvrs}_i)\varphi(X_i;\htheta_N)+\op\\ 
  \label{term3}
  &= \frac{1}{N}\sum_{i \in I^{\mvrs}_j}\varphi(X_i;\htheta_N)+\op.
\end{align}
Now we have proved that the last term of~\eqref{eq.vmvrs} converges. 

Then consider
\begin{align*}
&\frac{1}{N}\sumN\{\mI_{A^{\mvrs}_j}(\tilde S^{\mvrs}_i)-\mI_{A^{\mvrs}_j}(S^{\mvrs}_i)\}\left\|\frac{1}{N\tilde \pi_i}\varphi(X_i;\htheta_N)^{\otimes2}\right\|\\
&\leq \max_{1 \leq i\leq N}\left| \frac{1}{N\tilde \pi_i} \right| \sqrt{ \oneN\sumN\left\|\varphi(X_i;\htheta_N)\right\|^4}
  \sqrt{\oneN\sumN\{\mI_{A^{\mvrs}_j}(\tilde S^{\mvrs}_i)-\mI_{A^{\mvrs}_j}(S^{\mvrs}_i)\}^2}.
\end{align*}
According to Assumption~\ref{asmp05} and the fact that
$\Exp\{\varphi^4(X;\theta)\} < \infty$, we have 
\begin{align*}
\max_{1 \leq i\leq N}\left| \frac{1}{N\tilde \pi_i} \right| \sqrt{ \oneN\sumN\left\|\varphi(X_i;\htheta_N)\right\|^4}=O_P(1).
\end{align*}
Therefore similar to~\eqref{eq.D.1} and~\eqref{eq.D.2} we proved that for the first term of~\eqref{eq.vmvrs}, 
\begin{align*}
\oneN\sum_{i \in \tilde I^{\mvrs}_j}\frac{1}{N\tilde \pi_i}\varphi^2(X_i;\htheta_N)
&=\frac{1}{N}\sumN\mI_{\tilde A^{\mvrs}_j}(\tilde S^{\mvrs}_i)\frac{1}{N\tilde \pi_i}\varphi^2(X_i;\htheta_N)\\
&=\frac{1}{N}\sumN\mI_{A^{\mvrs}_j}(\tilde S^{\mvrs}_i)\frac{1}{N\tilde \pi_i}\varphi^2(X_i;\htheta_N)+\op\\
&=\frac{1}{N}\sumN\mI_{A^{\mvrs}_j}(S^{\mvrs}_i)\frac{1}{N\tilde \pi_i}\varphi^2(X_i;\htheta_N)+\op\\
&= \frac{1}{N}\sum_{i \in I^{\mvrs}_j}\frac{1}{N\tilde \pi_i}\varphi^2(X_i;\htheta_N)+\op.
\end{align*}
Moreover, since $\mid \tilde \pi_i-\pi_i\mid =o_p(1/N)$, we have
\begin{align}
\oneN\sum_{i \in \tilde I^{\mvrs}_j}\frac{1}{N\tilde \pi_i}\varphi^2(X_i;\htheta_N)
&= \frac{1}{N}\sum_{i \in I^{\mvrs}_j}\frac{1}{N\tilde \pi_i}\varphi^2(X_i;\htheta_N)+\op\notag\\
\label{eq.D.3}
&=\frac{1}{N}\sum_{i \in I^{\mvrs}_j}\frac{1}{N \pi_i}\varphi^2(X_i;\htheta_N)+\op
\end{align}
under Assumption~\ref{asmp05}. The convergence of the first term
of~\eqref{eq.vmvrs} has been proved.

Finally we consider
\begin{align*}
&\frac{1}{N}\sumN\{\mI_{A^{\mvrs}_j}(\tilde S^{\mvrs}_i)-\mI_{A^{\mvrs}_j}(S^{\mvrs}_i)\}\left\|\frac{1}{\tilde \Pi_j}\varphi(X_i;\htheta_N)\right\|\\
&\leq \max_{1 \leq j\leq k}\left| \frac{1}{\tilde \Pi_j} \right|\sqrt{ \oneN\sumN\left\|\varphi(X_i;\htheta_N)\right\|^2}
  \sqrt{\oneN\sumN\{\mI_{A^{\mvrs}_j}(\tilde S^{\mvrs}_i)-\mI_{A^{\mvrs}_j}(S^{\mvrs}_i)\}^2}.
\end{align*}
According to Assumption~\ref{asmp05}, we have that $\max_{1\leq j \leq k}\tilde \Pi^{-1}_j=O_P(1)$. Therefore 
\begin{align*}
\max_{1 \leq j\leq k}\left| \frac{1}{\tilde \Pi_j} \right|\sqrt{ \oneN\sumN\left\|\varphi(X_i;\htheta_N)\right\|^2}=O_P(1).
\end{align*}
Then similar to~\eqref{eq.D.1} and~\eqref{eq.D.2} we proved that for the second term of~\eqref{eq.vmvrs},
\begin{align}
\oneN\sum_{i \in \tilde I^{\mvrs}_j}\frac{1}{\tilde \Pi_j}\varphi(X_i;\htheta_N)
&=\frac{1}{N}\sumN\mI_{\tilde A^{\mvrs}_j}(\tilde S^{\mvrs}_i)\frac{1}{\tilde \Pi_j}\varphi(X_i;\htheta_N)\notag\\
&=\frac{1}{N}\sumN\mI_{A^{\mvrs}_j}(\tilde S^{\mvrs}_i)\frac{1}{\tilde \Pi_j}\varphi(X_i;\htheta_N)+\op\notag\\
&=\frac{1}{N}\sumN\mI_{A^{\mvrs}_j}(S^{\mvrs}_i)\frac{1}{\tilde \Pi_j}\varphi(X_i;\htheta_N)+\op\notag\\
&=\oneN\sum_{i \in I^{\mvrs}_j}\frac{1}{\tilde \Pi_j}\varphi(X_i;\htheta_N)+\op\notag\\
\label{term2}
&=\oneN\sum_{i \in I^{\mvrs}_j}\frac{1}{\Pi_j}\varphi(X_i;\htheta_N)+\op.
\end{align}
The last equation can be proved similarly to~\eqref{eq.D.3} since the $\max_{1\leq j \leq k}\tilde \Pi^{-1}_j=O_P(1)$ and $\mid \tilde \pi_i-\pi_i\mid =o_p(1/N)$. Therefore the convergence of the second term of~\eqref{eq.vmvrs} has been proved.

Combining \eqref{term3}, \eqref{eq.D.3} and \eqref{term2}, we have proved that 
\begin{align*}
  \tilde\V_{N}^{\mvrs}=\V_{N}^{\str}+\op
\end{align*}
and Theorem~\ref{alg.thm} is proved.
\end{proof}

\section{Additional Numerical Results}\label{sim.res} 
  
In this section, we provide additional simulation experiments and real-world
data analysis results.

\subsection{Simulation results for IBOSS}\label{sim.iboss}
For logistic regression, IBOSS \citep{cheng2020information} is a prominent
alternative for reducing computational costs through deterministic, design-based
data selection. Although the proposed MVRS framework is primarily designed to
enhance random subsampling methods, comparing its performance against
deterministic approaches like IBOSS provides valuable insights into their
relative strengths.

We evaluate IBOSS under both correctly specified models and scenarios involving
outliers. Since IBOSS is a deterministic selection method, we generate a new
dataset for each simulation repetition, with MSEs calculated relative to the
true parameter as
$\operatorname{MSE}=R^{-1}\sum_{r=1}^R\| \htheta_{n,r}-\theta\|^2$. The baseline
simulation settings follow Cases 1--4 in Section 4.1. To assess
robustness, we introduce outliers via a label-flipping mechanism: for a given
proportion $\alpha \in \{0, 0.01, 0.05, 0.1\}$, the $\lceil \alpha N \rceil$
observations with the highest leverage scores have their responses $y$ flipped.

\begin{figure}[H]
    \centering
    \begin{subfigure}{0.45\linewidth}
        \centering
        \includegraphics[width=\linewidth]{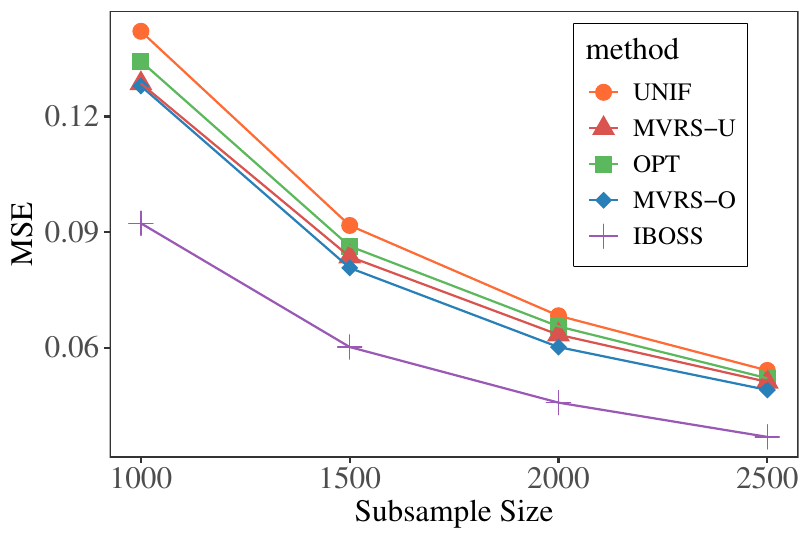}
        \caption{Case 1 (mzNormal)}
    \end{subfigure}
    \hspace{0.05\linewidth}
    \begin{subfigure}{0.45\linewidth}
        \centering
        \includegraphics[width=\linewidth]{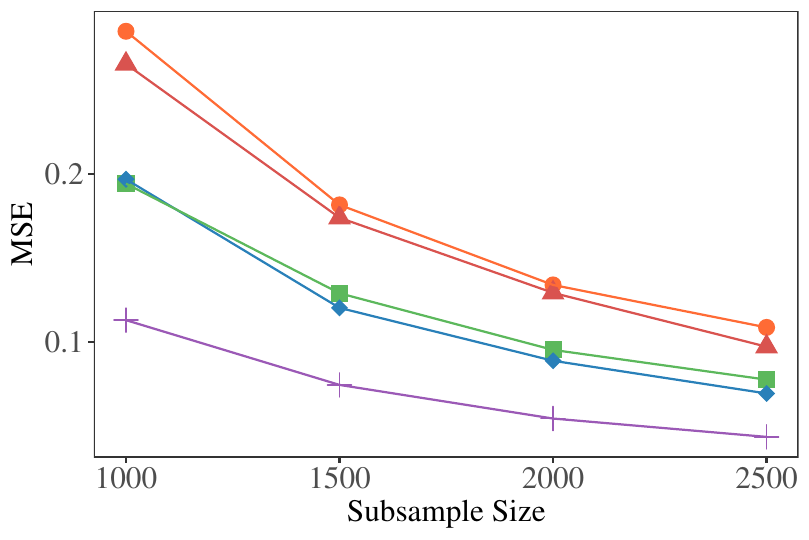}
        \caption{Case 2 (nzNormal)}
    \end{subfigure}
    \vfill
    \begin{subfigure}{0.45\linewidth}
        \centering
        \includegraphics[width=\linewidth]{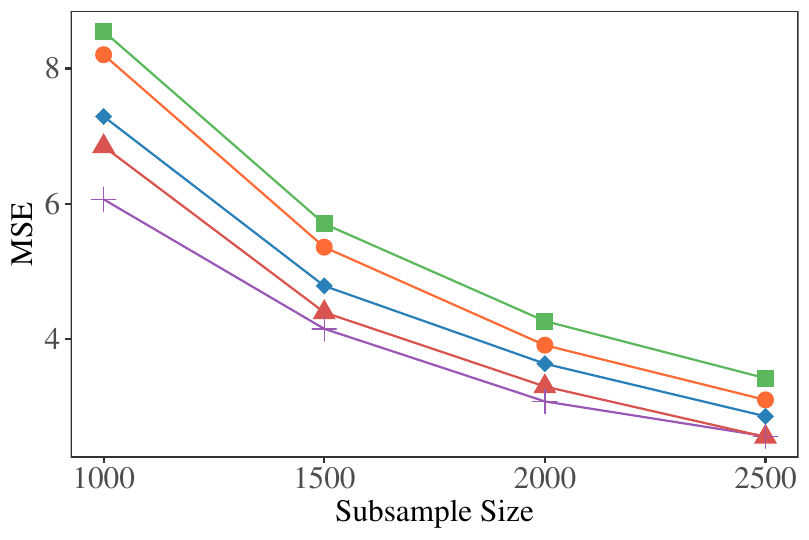}
        \caption{Case 3 (ueNormal)}
    \end{subfigure}
    \hspace{0.05\linewidth}
    \begin{subfigure}{0.45\linewidth}
        \centering
        \includegraphics[width=\linewidth]{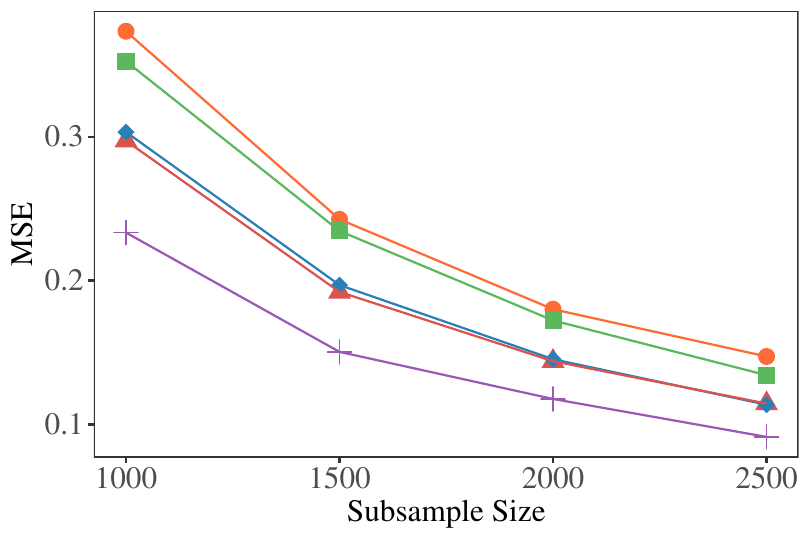}
        \caption{Case 4 (EXP)}
    \end{subfigure}
    \caption{MSEs for different subsample size $n$ in logistic regression.(IBOSS)}
    \label{fig:log.IBOSS}
\end{figure}

The results for the correctly specified model (Figure~\ref{fig:log.IBOSS})
indicate that IBOSS generally outperforms random subsampling-based
strategies. By deterministically selecting data points guided by the
D-optimality criterion, IBOSS does not bring in additional randomness and
achieves superior efficiency when model assumptions hold strictly. However, the
results under the outlier scenario (Figure~\ref{fig:log.IBOSS.out}) reveal a
critical trade-off. While the MSE for all methods increases with the outlier
proportion, random subsampling-based strategies exhibit greater
robustness. IBOSS, due to its reliance on extreme observations, is highly
sensitive to outliers.  The contaminated points in this scenario can skew the
deterministic selection, leading to substantial estimation bias, while the
stochasticity inherent in random subsampling methods provides a better safeguard
against such model misspecification.  Similar findings are also observed in the
real-world case study.

\begin{figure}[H]
    \centering
    \begin{subfigure}{0.45\linewidth}
        \centering
        \includegraphics[width=\linewidth]{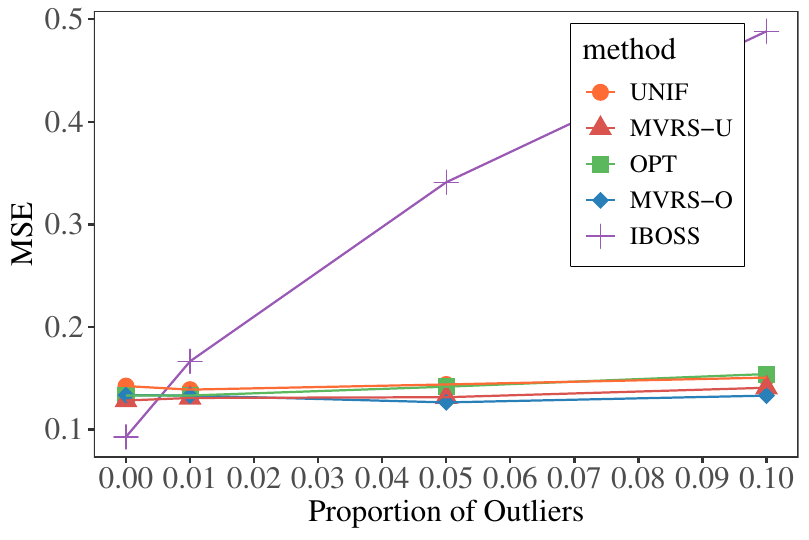}
        \caption{Case 1 (mzNormal)}
    \end{subfigure}
    \hspace{0.05\linewidth}
    \begin{subfigure}{0.45\linewidth}
        \centering
        \includegraphics[width=\linewidth]{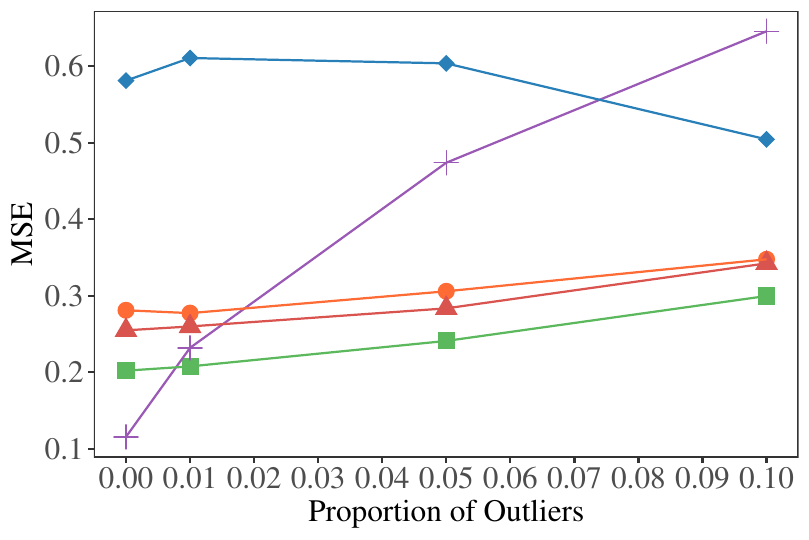}
        \caption{Case 2 (nzNormal)}
    \end{subfigure}
    \vfill
    \begin{subfigure}{0.45\linewidth}
        \centering
        \includegraphics[width=\linewidth]{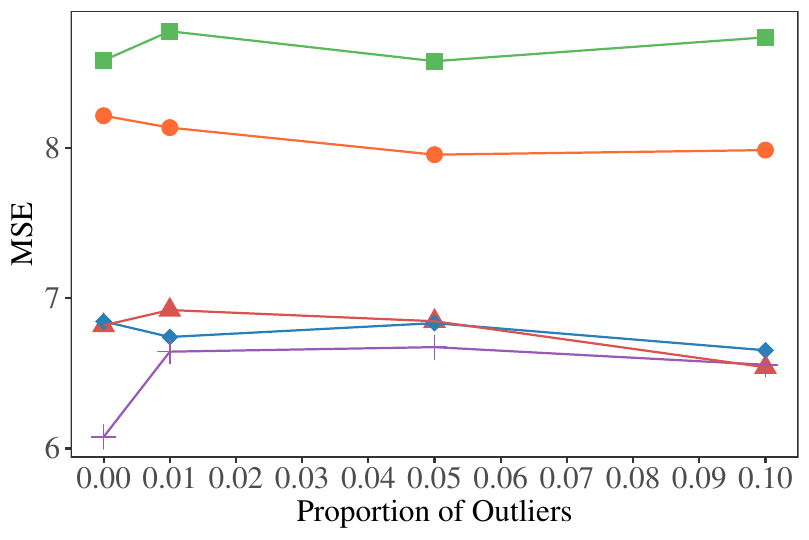}
        \caption{Case 3 (ueNormal)}
    \end{subfigure}
    \hspace{0.05\linewidth}
    \begin{subfigure}{0.45\linewidth}
        \centering
        \includegraphics[width=\linewidth]{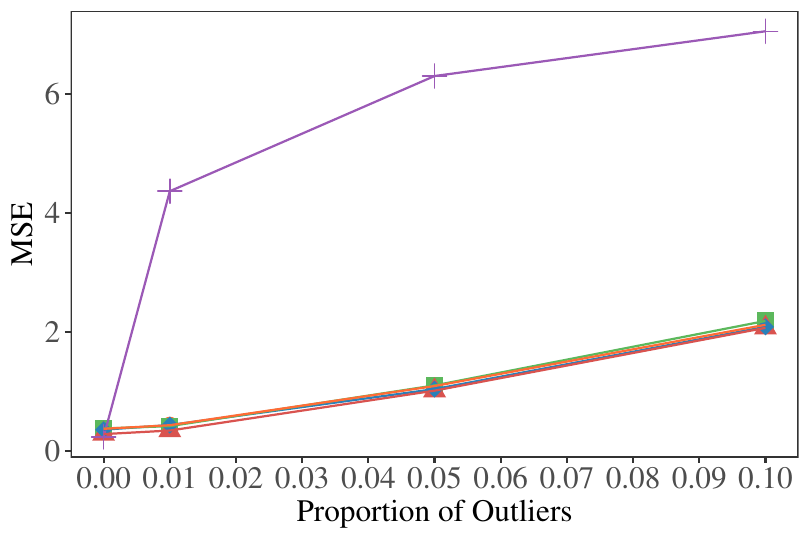}
        \caption{Case 4 (EXP)}
    \end{subfigure}
    \caption{MSEs for different proportion of outliers in logistic regression.(IBOSS)}
    \label{fig:log.IBOSS.out}
\end{figure}

While the MVRS framework effectively enhances random subsampling, integrating it
with deterministic methods like IBOSS presents an intriguing challenge. The
deterministic selection in IBOSS may conflict with the partitioning required for
stratification; for instance, \cite{wang2019divide} noted that dividing the full
dataset into non-overlapping blocks can lead to efficiency loss in
IBOSS. Consequently, a naive application of MVRS might cause an efficiency loss
for IBOSS. Furthermore, MVRS utilizes response information for
stratification. While this does not introduce sampling bias for random subsampling
methods due to the inverse probability weighting, it may bring in bias for
deterministic selection methods like IBOSS because they define estimators
through unweighted target functions. We leave the formal exploration of these
synergies as a direction for future research.

\subsection{Additional results for SUSY dataset}
In this section, we present additional results for the SUSY dataset.
Table~\ref{tab:cas_susy} presents the empirical SEs and Biases of individual
coefficients with $n=1000$ for the SUSY dataset. 

\begin{table}[H]
  
\centering
\caption{  Empirical SEs and Biases of individual coefficients with $n=1000$ for
  SUSY dataset. The SE ratio is calculated as the SE of MVRS divided
  by the SE of the corresponding baseline method.}
\label{tab:cas_susy}
\resizebox{1\textwidth}{!}{
\begin{tabular}{c|cc c cc c c|cc c cc c c}
    \hline
    \multirow{2}{*}{$\theta_i$}&
  \multicolumn{2}{c}{UNIF}&&\multicolumn{2}{c}{MVRS-U}&& &
  \multicolumn{2}{|c}{OPT}&& \multicolumn{2}{c}{MVRS-O}&&\\
    & SE & Bias && SE & Bias && SE Ratio & SE & Bias && SE & Bias &&SE Ratio\\
    \hline
1 & 0.018 & 0.040 && 0.017 & 0.034 && 0.982 & 0.011 & 0.007 && 0.010 & 0.006 && 0.928 \\
2 & 0.245 & 0.096 && 0.202 & 0.120 && 0.827 & 0.139 & 0.017 && 0.124 & -0.013 && 0.889 \\
3 & 0.008 & 0.002 && 0.009 & 0.001 && 1.103 & 0.009 & -0.005 && 0.009 & 0.000 && 0.983 \\
4 & 0.008 & 0.005 && 0.008 & 0.000 && 0.932 & 0.008 & -0.004 && 0.008 & -0.005 && 0.992 \\
5 & 0.063 & 0.018 && 0.065 & 0.049 && 1.043 & 0.040 & 0.012 && 0.037 & 0.005 && 0.914 \\
6 & 0.008 & 0.002 && 0.008 & -0.001 && 0.934 & 0.009 & 0.002 && 0.008 & -0.005 && 0.931 \\
7 & 0.008 & 0.001 && 0.007 & -0.005 && 0.918 & 0.008 & -0.001 && 0.008 & 0.000 && 1.062 \\
8 & 0.469 & 0.175 && 0.438 & 0.204 && 0.935 & 0.268 & 0.047 && 0.233 & 0.012 && 0.869 \\
9 & 0.007 & 0.000 && 0.007 & 0.000 && 0.929 & 0.007 & -0.001 && 0.007 & -0.002 && 0.948 \\
10 & 0.085 & -0.008 && 0.078 & -0.036 && 0.920 & 0.052 & -0.005 && 0.056 & -0.006 && 1.084 \\
11 & 0.125 & -0.005 && 0.088 & -0.024 && 0.705 & 0.070 & 0.007 && 0.058 & 0.011 && 0.821 \\
12 & 3.055 & 0.046 && 1.018 & 0.003 && 0.333 & 1.163 & -0.016 && 0.454 & 0.006 && 0.391 \\
13 & 0.478 & -0.072 && 0.440 & -0.130 && 0.921 & 0.248 & -0.051 && 0.213 & -0.021 && 0.857 \\
14 & 0.076 & -0.017 && 0.073 & -0.029 && 0.952 & 0.058 & -0.016 && 0.051 & -0.005 && 0.879 \\
15 & 0.131 & -0.012 && 0.116 & -0.023 && 0.886 & 0.075 & 0.019 && 0.073 & 0.010 && 0.969 \\
16 & 2.540 & -0.102 && 0.845 & -0.095 && 0.333 & 0.932 & 0.003 && 0.378 & 0.004 && 0.405 \\
17 & 0.228 & 0.022 && 0.205 & 0.064 && 0.900 & 0.136 & 0.002 && 0.111 & 0.005 && 0.817 \\
18 & 0.023 & -0.008 && 0.021 & 0.004 && 0.911 & 0.019 & 0.005 && 0.019 & 0.005 && 0.991 \\
19 & 0.030 & 0.001 && 0.026 & 0.005 && 0.869 & 0.024 & 0.008 && 0.023 & 0.012 && 0.946 \\
    \hline
\end{tabular}
}
\end{table}

\end{document}